\newtheorem{theorem}{Theorem}
\newtheorem{proposition}[theorem]{Proposition}%
\providecommand{\norm}[1]{\lVert#1\rVert}
\begin{document}

\definecolor{c1}{HTML}{00468B}
\definecolor{c2}{HTML}{ED0000}
\definecolor{c3}{HTML}{42B540}
\definecolor{c4}{HTML}{0099B4}
\definecolor{c5}{HTML}{925E9F}
\definecolor{c6}{HTML}{FDAF91}

\title[Clustering multivariate functional data using the epigraph and hypograph indices: a case study on Madrid air quality.]{Clustering multivariate functional data with the epigraph and hypograph indices: a case study on Madrid air quality.}


\author*[1]{\fnm{Belén} \sur{Pulido}}\email{belen.pulido@uc3m.es}

\author[2,3]{\fnm{Alba M.} \sur{Franco-Pereira}}\email{albfranc@ucm.es}

\author[1,4]{\fnm{Rosa E.} \sur{Lillo}}\email{rosaelvira.lillo@uc3m.es}

\affil*[1]{\orgdiv{uc3m-Santander Big Data Institute (IBiDat)}, \orgname{Universidad Carlos III de Madrid}, \orgaddress{\street{Calle Madrid, 126}, \city{Getafe}, \postcode{28902}, \state{Madrid}, \country{Spain}}}

\affil[2]{\orgdiv{Department of Statistics and O.R.}, \orgname{Universidad Complutense de Madrid}, \orgaddress{\street{Plaza de Ciencias}, \city{Madrid}, \postcode{28040}, \state{Madrid}, \country{Spain}}}

\affil[3]{\orgdiv{Instituto de Matemática Interdisciplinar (IMI)}, \orgname{Universidad Complutense de Madrid}, \orgaddress{\street{Plaza de Ciencias}, \city{Madrid}, \postcode{28040}, \state{Madrid}, \country{Spain}}}

\affil[4]{\orgdiv{Department of Statistics}, \orgname{Universidad Carlos III de Madrid}, \orgaddress{\street{Calle Madrid, 126}, \city{Getafe}, \postcode{28902}, \state{Madrid}, \country{Spain}}}


\abstract{With the rapid growth of data generation, advancements in functional data analysis (FDA) have become essential, especially for approaches that handle multiple variables at the same time. This paper introduces a novel formulation of the epigraph and hypograph indices, along with their generalized expressions, specifically designed for multivariate functional data (MFD). These new definitions account for interrelationships between variables, enabling effective clustering of MFD based on the original data curves and their first two derivatives. The methodology developed here has been tested on simulated datasets, demonstrating strong performance compared to state-of-the-art methods. Its practical utility is further illustrated with two environmental datasets: the Canadian weather dataset and a 2023 air quality study in Madrid. These applications highlight the potential of the method as a great tool for analyzing complex environmental data, offering valuable insights for researchers and policymakers in climate and environmental research.

}

\keywords{Epigraph, hypograph, multivariate functional data, clustering, EHyClus, environmental data analysis}



\maketitle

\section{Introduction}\label{sec1}

FDA has emerged as a powerful framework for analyzing data observed over continuous intervals, providing a more comprehensive understanding of underlying processes and capturing inherent variability. FDA represents data as functions rather than fixed points, offering new insights into various areas of knowledge such as medicine, economics, and environmental science. Univariate functional data refers to data where each function represents the evolution of a single variable over the continuum. A comprehensive overview of FDA can be found in \citet{ramsay} and \citet{ferraty}. More recent approaches for FDA can be found in \citet{horvath}, \citet{hsing}, and \citet{wang}. By modeling functions rather than discrete values, FDA enables to extract valuable information and to detect underlying patterns that may be obscured in traditional data analysis approaches. 

However, in many real-world scenarios, multiple variables evolve simultaneously over a continuum, leading to a multivariate functional dataset. 
The analysis of MFD offers a wealth of possibilities in numerous domains. For example, 
in environmental monitoring, multiple pollutants are often measured simultaneously across different locations over time. Incorporating the multidimensional nature of the data allows a deeper understanding of complex systems, facilitating more informed decision-making. This extension presents significant challenges, as it requires considering the interrelationships between different dimensions of the data and developing appropriate statistical tools for their efficient analysis. 
Extending fundamental tools to the multivariate functional context, such as summary statistics, dimension reduction techniques, clustering, classification, and regression analyses, remains an active research area. Examples from environmental and health data illustrate this progress: \citet{di2015functional} and \citet{qian2024multivariate} for functional principal component analysis, \citet{carroll2021cross} focuses on data registration, \citet{acal2022functional} contributes to the analysis of variance for functional data, and \citet{matabuena2023estimating} contributes to regression techniques.
Together, these advancements enhance our understanding of complex environmental issues, enabling the identification of key challenges and guiding adaptive measures.

While clustering methodologies are well-established for multivariate data, they have grown considerably in the functional context to address key features such as infinite dimensions, irregular shapes, and complex dependencies. This has fueled interest in developing clustering techniques specifically for FDA. However, clustering techniques for functional data have primarily been developed for univariate cases, with limited extension to MFD. See \citet{traore2019}, \citet{wu2022functional} and \citet{pulido2023} for some examples in the one-dimensional context.

Despite this gap, there has been significant progress in applying multivariate techniques within the functional context. \citet{jacques2014S}, \citet{zhang2023review}, and \citet{gertheiss2024functional} offer comprehensive overviews of clustering methods for functional data, emphasizing that a greater body of work exists for one-dimensional cases than for MFD. Addressing the complexities of infinite-dimensional datasets requires innovative methodologies to advance this field, and recent research is actively contributing to this effort.

For instance, model-based strategies are explored in \citet{zeng2019simultaneous}, \citet{schmutz2020}, \citet{anton2023model},  and \citet{hael2024dynamic}. Additionally, \citet{ieva2013depth}, \citet{yamamoto2017}, and \citet{martino} present k-means-based approaches, while \citet{song2024multi} applies a multivariate clustering method to the functional principal components of multivariate data. Together, these studies highlight a growing interest in adapting clustering techniques to the unique challenges of MFD.

The primary aim of this work is to introduce a methodological advancement for clustering MFD. Building on the methodology proposed in \citet{pulido2023}, which applies a multivariate clustering approach to the epigraph and hypograph indices of data and their derivatives, we extend this approach to the multivariate functional context. This requires adapting these indices to the multivariate setting. \citet{ieva2020component} suggests an extension based on weighted averages of single indices, which can be applied to extend the outliergram by \citet{arribas} to the multivariate functional context. In this direction, we propose novel extensions that preserve previously overlooked relationships between the different components in the data. 

This paper is organized as follows. Section \ref{multiv_ind} reviews the definition of the epigraph and hypograph indices in the univariate case, discusses the existing extension based on weighted averages of univariate indices for each component \citep{ieva2020component}, and introduces novel definitions of the epigraph and hypograph indices for MFD. At the end of Section \ref{multiv_ind}, the relationship between the multivariate and univariate indices is examined, along with various theoretical properties, with proofs provided in the Appendix.
In Section \ref{clust}, EHyClus methodology is presented for clustering MFD based on these indices and alternative clustering approaches are reviewed. Section \ref{simul} evaluates EHyClus by comparing its performance with benchmark methodologies on simulated datasets. Then, Section \ref{real} applies EHyClus to two real-world datasets: the Canadian Weather dataset and a dataset of $\text{NO}_2$ and $PM10$ concentrations in Madrid, where the optimal number of clusters is determined.
Finally, Section \ref{conc} concludes with remarks on the contributions and potential future research directions in multivariate functional data analysis.

\section{Multivariate epigraph and hypograph indices} \label{multiv_ind}

The one-dimensional definitions of the epigraph and hypograph indices were first introduced by \citet{franc2011} and have since been applied for various purposes in the literature. In this work, we adopt the one-dimensional definitions provided by \citet{martin2016} and propose novel extensions to the multivariate framework based on these definitions. Additionally, we consider the work of \citet{ieva2020component}, which introduces multivariate versions of the indices using weighted averages.



One of the main purposes of this work is to broaden the application of the epigraph and the hypograph indices from a univariate to a multivariate context. Before proceeding, we first recall the definitions of the epigraph and hypograph indices for univariate functional data.

Let $C(\mathcal{I}, \mathbb{R})$ be the space of real continuous functions defined from a compact interval $\mathcal{I}$ to $\mathbb{R}$.  Consider a stochastic process $X \colon \mathcal{I} \longrightarrow \mathbb{R} $ with probability distribution $P_{X}$. The graph of a function $x$ in the space of continuous functions $C(\mathcal{I},\mathbb{R})$ is defined as $G(x) = \{(t,x(t)), \ \  \textit{for all} \ t \in \mathcal{I} \}.$ The epigraph (epi) and the hypograph (hyp) of a curve $x$ can then be introduced as follows:
$$epi(x)=\{(t,y) \in \mathcal{I} \times \mathbb{R} : y \geq x(t)\},$$
$$hyp(x)=\{(t,y) \in \mathcal{I} \times \mathbb{R} : y \leq x(t)\}.$$

Given a sample of curves $\{x_1(t),...,x_n(t)\}$, the epigraph and the hypograph  indices of a curve $x$ ($\mathrm{EI}_n(x)$ and $\mathrm{HI}_n(x)$ respectively) are defined as follows: 
$$\mathrm{EI}_n(x)= 1-\frac{ \sum_{i=1}^n{I(\{G(x_i)\subseteq epi(x)\})}}{n}=1-\frac{\sum_{i=1}^nI\{x_i(t)\geq x(t), \  \textit{for all} \ t\in \mathcal{I}\}}{n},$$
$$\mathrm{HI}_n(x)=\frac{ \sum_{i=1}^n{I(\{G(x_i)\subseteq hyp(x)\})}}{n}=\frac{ \sum_{i=1}^n{I\{x_i(t)\leq x(t), \  \textit{for all} \ t \in \mathcal{I}\}}}{n}.$$

The epigraph index of a curve $x$ is defined as one minus the proportion of curves in the sample that are entirely contained in the epigraph of $x$, or equivalently, one minus the proportion of curves in the sample that are completely above $x$. In the same way, the hypograph index of $x$ represents the proportion of curves in the sample that are entirely included in the hypograph of $x$, or equivalently, the proportion of curves in the sample that are completely below $x$.

When there are many intersections between the curves in the sample, the previous definitions may become excessively restrictive, leading to values close to 1 and 0 for almost all the curves. Consequently, modified versions, denoted as $\mathrm{MEI}_n(x)$ for the epigraph index and $\mathrm{MHI}_n(x)$ for the hypograph index, are introduced to handle this issue:
\begin{equation} \label{mei_1d}
    \mathrm{MEI}_n(x) =1- \sum_{i=1}^n{ \frac{ \lambda ({t \in \mathcal{I}} : x_i(t) \geq x(t))}{n \lambda(\mathcal{I})}},
\end{equation}
\begin{equation}\label{mhi_1d}
    \mathrm{MHI}_n(x) = \sum_{i=1}^n{ \frac{ \lambda ({t \in \mathcal{I} : x_i(t) \leq x(t)})}{n \lambda(\mathcal{I})}},
\end{equation}
 where $\lambda$ stands for Lebesgue's measure on $\mathbb{R}$. These definitions allow for the interpretation of the indices as the proportion of time (when $\mathcal{I}$ is considered as a time interval) the curves in the sample are above or below $x$, respectively.

Let $C(\mathcal{I}, \mathbb{R}^p)$ be the space of real continuous functions defined from a compact interval $\mathcal{I}$ to $\mathbb{R}^p$.  Consider a stochastic process $\mathbf{X} \colon \mathcal{I} \longrightarrow \mathbb{R}^p $ with probability distribution $P_{\mathbf{X}}$. Let $\{\mathbf{x_1}(t), $ ... $\mathbf{x_n}(t)\}$ be a sample of curves from $P_{\mathbf{X}}$. Thus, 
\begin{align*}
    \mathbf{x_i} \colon \mathcal{I} &\to \mathbb{R}^p\\
        t          &\mapsto (x_{i1}(t), ..., x_{ip}(t))
\end{align*} where $i=1,\ldots,n.$
From now on, the multidimensional curves and the names of the multivariate indices are presented in bold font.

Numerous techniques for one-dimensional functional data rely on extremality indices, such as the outliergram by \citet{arribas}, the functional boxplot by \citet{martin2016}, and the homogeneity test by \citet{franc2020}. To expand the applicability of these methods into the multivariate context, a crucial first step is to generalize the underlying indices. The first extension of the epigraph and hypograph indices to the multivariate context is given by \citet{ieva2020component}, where they propose a definition of the MEI based on the extension of the band depth for MFD given in \citet{ieva2013depth}. This extension defines the multivariate MEI as a weighted average of the univariate counterparts. Given a set of functions $\textbf{x}_1(t),\ldots,\textbf{x}_n(t)$ the multivariate modified epigraph index of a multivariate curve $\mathbf{x}$  ($\mathbf{\rho MEI}_n$) is defined as the weighted average of the MEI values with respect to the sample curves $x_{1k}(t),\ldots, x_{nk}(t)$ for each component $k=1,\ldots,p$. To simplify the notation, $MEI_n(x_k)$ will denote the univariate MEI of the $k$-th component of the reference curve $\mathbf{x}_l$, with $1 \leq l \leq n$,  with respect to the univariate sample curves $x_{1k}(t),\ldots, x_{nk}(t)$.

\begin{equation} \label{rhomei}
    \mathbf{\rho MEI}_n(\mathbf{x})=\sum_{k=1}^p \rho_k \mathrm{MEI}_n(x_k),
\end{equation}
with $\rho_k > 0 \  \text{for all} \ k=1,\ldots,p,$ and $\sum_{k=1}^p \rho_k=1.$

The same approach can be followed to define the modified hypograph index, obtaining:
\begin{equation} \label{rhomhi}
    \mathbf{\rho MHI}_n(\mathbf{x})=\sum_{k=1}^p \rho_k \mathrm{MHI}_{n}(x_k).
\end{equation}

These definitions require a choice of the weights $\rho_k, \ k=1, \ldots, p,$ that, in general, is problem-driven, with no standard approach to calculate these weights. If there is no a priori knowledge about the dependence structure between the data components, these weights can be chosen uniformly, as $\rho_k=\frac{1}{p} \ \text{for all} \ k=1,\ldots,p.$ Alternative weight definitions have been suggested, relying on the variability of each component. \citet{ieva2020component} present a strategy to determine a data-driven set of weights $\{\rho_1,\ldots,\rho_p\}$, with $\rho_i=\frac{q_i}{\sum_{I=1}^p q_i}$, with $q_i=1/\lambda_i^{(1)}$ such that $\lambda_i^{(1)}$ is the maximum eigenvalue of the variance-covariance operator of the $i$-component, $\rho_i \geq 0, \  \text{for all} \ i=1,\ldots,p$ and $\sum_{i=1}^p \rho_i=1.$ 

The multivariate modified epigraph and hypograph indices with uniform weights, referred to as \textbf{uMEI} and \textbf{uMHI}, are available in the R package \verb|roahd| \citep{ieva2019roahd}, and the definitions with covariance-based weights denoted as \textbf{cMEI} and \textbf{cMHI}, have been computed with our own implementation.

An evident limitation of previous definitions is their lack of consideration for the multivariate functional structure of curves. Our objective is to address this issue by extending the concepts of epigraph, hypograph, and their generalized versions to the multivariate functional context, incorporating the interdependencies among components of the curves. The proposed definitions compute the epigraph (or hypograph) index of a given curve, $\mathbf{x}$, as the proportion of curves with all their components fully above (or below) those of $\mathbf{x}$. These new definitions offer two key advantages: independence from data-driven weight assignments and inclusion of interdependencies between components, providing a more integrated view of MFD. 

The multivariate epigraph index of $\mathbf{x}$ ($\mathbf{EI}_n(\mathbf{x})$) with respect to a set of functions $\mathbf{x_1}(t), $ ... $\mathbf{x_n}(t)$ is defined as

\begin{equation}\label{ei1}
    \begin{split}
    \mathbf{EI}_n(\mathbf{x})& =  1-\frac{ \sum_{i=1}^n{I\{\bigcap_{k=1}^p\{G(x_{ik})\subseteq epi(x_k)\}\}}}{n}\\ & =  1-\frac{\sum_{i=1}^n{I\{\bigcap_{k=1}^p\{x_{ik}(t)\geq x_k(t), \  \text{for all} \ t \in \mathcal{I}\}}\}}{n} \\ & =
    1-\frac{\sum_{i=1}^n{\prod_{k=1}^p I\{x_{ik}(t)\geq x_k(t), \  \text{for all} \ t \in \mathcal{I}\}}}{n},
\end{split}
\end{equation}
 where $I\{A\}$ is 1 if $A$ true and 0 otherwise.
 
In the same way, the multivariate hypograph index of $\mathbf{x}$ ($\mathbf{HI}_n(\mathbf{x})$) with respect to a set of functions $\mathbf{x_1}(t), $ ... $\mathbf{x_n}(t)$ is defined as 

\begin{equation} \label{hi1}
    \begin{split}
        \mathbf{HI}_n(\mathbf{x}) & =  \frac{ \sum_{i=1}^n{I\{\bigcap_{k=1}^p\{G(x_{ik})\subseteq hyp(x_k)\}\}}}{n} \\& =\frac{\sum_{i=1}^n{I\{\bigcap_{k=1}^p\{x_{ik}(t)\leq x_k(t), \  \text{for all} \in \mathcal{I}\}}\}}{n} \\ & =
    \frac{\sum_{i=1}^n{\prod_{k=1}^p I\{x_{ik}(t)\leq x_k(t), \  \text{for all} \ t \in \mathcal{I}\}}}{n}.
    \end{split}
\end{equation}

Their population versions are given by: 

\begin{equation*}
    \mathbf{EI}(\mathbf{x},P_{\mathbf{X}})\equiv \mathbf{EI}(\mathbf{x}) =   1-P(\bigcap_{k=1}^p\{G(X_k) \subseteq epi(x_k)\})=  1-P(\bigcap_{k=1}^p\{X_k(t)\geq x_k(t),t\in \mathcal{I}\}),
\end{equation*} and,
\begin{equation*}
    \mathbf{HI}(\mathbf{x},P_{\mathbf{X}})\equiv \mathbf{HI}(\mathbf{x}) =   P(\bigcap_{k=1}^p\{G(X_k) \subseteq hyp(x_k)\})= P(\bigcap_{k=1}^p\{X_k(t)\leq x_k(t),t\in \mathcal{I}\}).
\end{equation*}

Analogous to the one-dimensional case, the definitions of the epigraph and the hypograph indices in multiple dimensions are highly restrictive. Consequently, it is necessary to introduce generalized versions of these two indices.

The multivariate generalized epigraph index of $\mathbf{x}$ ($\mathbf{MEI}_n(\mathbf{x})$) with respect to a set of functions $\mathbf{x_1}(t), $ ... $\mathbf{x_n}(t)$ is defined as 

\begin{equation} \label{mei1}
    \mathbf{MEI}_n(\mathbf{x})=  1-\frac{\lambda(\bigcap_{k=1}^p\{t \in \mathcal{I} : x_{ik}(t)\geq x_k(t)\})}{\lambda(\mathcal{I})}.
\end{equation}

In the same way, the generalized multivariate hypograph index of $\mathbf{x}$ ($\mathbf{MHI}_n(\mathbf{x})$) with respect to a set of functions $\mathbf{x_1}(t), $ ... $\mathbf{x_n}(t)$ is defined as 

\begin{equation} \label{mhi1}
    \mathbf{MHI}_n(\mathbf{x})= \frac{\lambda(\bigcap_{k=1}^p\{t \in \mathcal{I} : x_{ik}(t)\leq x_k(t)\})}{\lambda(\mathcal{I})}.
\end{equation}

If $\mathcal{I}$ is seen as a time interval, the multivariate generalized epigraph (hypograph) index of a given curve $\mathbf{x}$ can be understood as the proportion of time the curves in the sample have all their components totally above (below) $\mathbf{x}$. Note that these generalized definitions require that all the components are defined in the same interval $\mathcal{I}$.

The corresponding population versions of $\mathbf{MEI}_n(\mathbf{x})$ and $\mathbf{MHI}_n(\mathbf{x})$ are

$$\mathbf{MEI}(\mathbf{x},P_{\mathbf{X}})\equiv \mathrm{MEI}(\mathbf{x})=  1-\sum_{i=1}^n\frac{E(\lambda(\bigcap_{k=1}^p\{t \in \mathcal{I} : X_k(t)\geq x_k(t)\}))}{n\lambda(\mathcal{I})}, \ \text{and} $$

$$\mathbf{MHI}(\mathbf{x},P_{\mathbf{X}})\equiv \mathrm{MHI}(\mathbf{x})=  \sum_{i=1}^n\frac{E(\lambda(\bigcap_{k=1}^p\{t \in \mathcal{I} : X_k(t)\leq x_k(t)\}))}{n\lambda(\mathcal{I})}.$$

Now, the relationship between the definitions of the epigraph and hypograph indices in the multivariate and the univariate cases are presented. The multivariate definitions of the indices, $\rho$\textbf{MEI} and $\rho$\textbf{MHI}, given by \citet{ieva2020component} (Equations~\eqref{rhomei} and~\eqref{rhomhi}) are obtained as a weighted average of the one-dimensional indices, thereby establishing a direct connection between these multivariate definitions and their one-dimensional counterparts. 

A non-linear relationship can be established between \textbf{MEI} and \textbf{MHI}, which depends on the one-dimensional counterparts. This dependency also creates a connection with MEI and MHI and the weighted averages extensions $\rho$\textbf{MEI} and $\rho$\textbf{MHI}.

If we consider a multivariate functional dataset with dimensions $p>1$, the relationship between \textbf{MEI} and \textbf{MHI} depends on the values of the indices in all dimensions smaller than $p$. 
The following definitions and notation will be introduced to give an explicit formula of this relation: 
\begin{equation}\label{ap}
    A^p_{j_1,\ldots,j_r}=\sum_{i=1}^n \frac{\lambda(\bigcap_{k=1}^r \{ x_{ij_k} \geq x_{j_k} \}) }{n \lambda(I)},
\end{equation} and 
\begin{equation}\label{bp}
    B^p_{j_1,\ldots,j_r}=\sum_{i=1}^n \frac{\lambda(\bigcap_{k=1}^r \{ x_{ij_k} \leq x_{j_k} \} )}{n \lambda(I)},
\end{equation}
where $p$ is the number of dimensions of the initial dataset, and  $\{j_1,\ldots,j_r\} \subseteq \{1,\ldots,p\}$ denote the $r$ dimensions to be considered to define the index with dimension $r$. These $r$ dimensions form a permutation of size $r$ from the $p$ dimensions of the original dataset. In light of the preceding notation, the indices for a dataset consisting of $n$ functions in $p$ dimensions, are given as follows: 
\begin{equation}
    \mathbf{MEI}_n(\textbf{x})=1-A^p_{1,...,p},
    \label{mei}
\end{equation}
and 
\begin{equation}
    \mathbf{MHI}_n(\textbf{x})=B^p_{1,...,p}.
    \label{mhi}
\end{equation}

Now, the notation $\mathbf{MEI}^p_{n,j_1,...,j_r}$ and $\mathbf{MHI}^p_{n,j_1,...,j_r}$ will be considered to denote the epigraph/hypograph indices in dimension $r$ with $r \leq p$. The subset formed by $r$ of the $p$ dimensions conforming to the initial dataset, as mentioned before, will be denoted as $\{j_1,\ldots,j_r\}$.

In that way, 
\begin{equation}
    \mathbf{MEI}^p_{n,j_1,...,j_r}(\textbf{x})=1-A^p_{j_1,...,j_r},
    \label{mei_A}
\end{equation}
and 
\begin{equation}
    \mathbf{MHI}^p_{n,j_1,...,j_r}(\textbf{x})=B^p_{j_1,...,j_r},
    \label{mhi_B}
\end{equation}

If $r=p=1$, equations~\eqref{mei_1d} and~\eqref{mhi_1d} are particular cases of the equations~\eqref{mei_A} and~\eqref{mhi_B}, while if $r=p> 1$, equations~\eqref{mei} and \eqref{mhi} correspond to equations~\eqref{mei_A} and \eqref{mhi_B}, respectively. Thus, in order to simplify the notation,  $$\mathbf{MEI}^p_{n,j_1,...,j_p}(\textbf{x})=\mathbf{MEI}_n(\textbf{x}),$$ and $$\mathbf{MHI}^p_{n,j_1,...,j_p}(\textbf{x})=\mathbf{MHI}_n(\textbf{x}).$$

We are now poised to establish a relationship between the indices, which can be used for both one and multiple dimensional cases. \newline

\begin{theorem}\label{rel}
     The following relation between $\mathbf{MEI}_n$ and $\mathbf{MHI}_n$ holds for a dataset with $n$ curves in $p$ dimensions.  Let  $\mathbf{x_l}, \ 1 \leq l \leq n,$ be one of the sample curves, then the following relation holds, 
     \begin{align*}
         & \mathbf{MHI}_n(\mathbf{x_l})+(-1)^{p}\mathbf{MEI}_n(\mathbf{x_l})=  \\ & \sum_{r=1}^{p-1}\sum_{1\leq j_1<\ldots<j_r\leq p}^p (-1)^{r+p+1} \mathbf{MHI}^p_{n,j_1,\ldots,j_r}(\mathbf{x_l})+(-1)^{p+1}\frac{1}{n}+(-1)^{p+1}R_p.
     \end{align*}
     where $R_p = \sum_{k=1}^{2^p-1} \sum_{\substack{i=1 \\ i \neq j}}^n \frac{C}{n\lambda(I)},$ with $C \in \mathcal{C}_p$, where $\mathcal{C}_p$ is the set of the Lebesgue measure of all the possible intersections of $p$ elements of the type $\{x_{ij}> x_j\}$ or \\ $\{x_{ij}= x_j\}$, $j=1,\ldots,p$. 
\end{theorem} 

Note that, when evaluating this expression for $p=1$, we have that $$\text{MHI}_n(x)-\text{MEI}_n(x)=\frac{1}{n}+R_1.$$ In this case, $R_1=\sum_{\substack{i=1 \\ i \neq l}}^n \frac{\lambda\{x_{i}=x_l\}}{n\lambda(I)}$, which is 0 in case $\lambda\{x_{i}=x_l\}=0$, for $\ i \neq l$.

If the expression is now evaluated for $p=2$, then: $$\mathbf{MHI}_n(\mathbf{x})+\mathbf{MEI}_n(\mathbf{x})=\mathbf{MHI}^2_{n,1}(\mathbf{x})+\mathbf{MHI}^2_{n,2}(\mathbf{x})-\frac{1}{n}-R_2.$$ 

For $p=3$, the relationship will be given by: 
\begin{align*}
    & \mathbf{MHI}_n(\mathbf{x})-\mathbf{MEI}_n(\mathbf{x})= \mathbf{MHI}^3_{n,1,2}(\mathbf{x})+\mathbf{MHI}^3_{n,1,3}(\mathbf{x})+\mathbf{MHI}^3_{n,2,3}(\mathbf{x}) \\ & - \mathbf{MHI}^3_{n,1}(\mathbf{x})-\mathbf{MHI}^3_{n,2}(\mathbf{x})-\mathbf{MHI}^3_{n,3}(\mathbf{x})+\frac{1}{n}+R_3.
\end{align*}

 In order to facilitate comprehension of the general case, the proof when $p=3$ appears in Appendix~\ref{secA1}, along with the proof for the general case.

In summary, Theorem~\ref{rel} establishes a consistent relationship between \textbf{MEI} and \textbf{MHI} for MFD. Specifically, it demonstrates that this relationship remains constant in the one-dimensional case, where $R_1 = 0$. This is because, in both simulations and real data, it is rare for curves to overlap across intervals of positive Lebesgue measure.  

Note that one of the terms is $\sum_{i=1}^p \mathbf{MHI}_{n,i}^p$, which represents the sum of the generalized epigraph indices in one dimension, making it possible to establish a connection not only with the one dimensional indices (MEI and MHI) but also with the multivariate definitions introduced by \citet{ieva2020component} ($\rho$\textbf{MEI} and $\rho$\textbf{MHI}).

Now, we present several properties satisfied by \textbf{EI}, \textbf{HI}, \textbf{MEI} and \textbf{MHI} as given by Equations~\eqref{ei1},~\eqref{hi1},~\eqref{mei1} and~\eqref{mhi1}, respectively. They follow the line of \citet{lop2011}, \citet{ieva2013depth}, \citet{lop2014}, and \citet{franc2020}. The proofs of these results appear in Appendix~\ref{secA1}. \newline

\begin{proposition}
\label{p1}
The \textbf{EI} and \textbf{HI} with respect to a set of functions $\mathbf{x_1}(t), $ ... $\mathbf{x_n}(t)$ are invariant under the following transformations:
\begin{itemize}
    \item[a.] Let $\mathbf{T}(\mathbf{x})$ be the transformation function, defined as $\mathbf{T}(\mathbf{x}(t))=\mathbf{A}(t)\mathbf{x}(t)+\mathbf{b}(t),$ where $t \in \mathcal{I}$ and $\mathbf{A}(t) =\mbox{diag}(A_1(t), \dots, A_p(t))$ is a $p \times p$ matrix with $A_{j}(t)>0$ and $\mathbf{b}(t) \in C(\mathcal{I}, \mathbb{R}^p)$. Then, $$\mathbf{EI}(\mathbf{T}(\mathbf{x}))=\mathbf{EI}(\mathbf{x}), \ \mathrm{and,}$$
    $$\mathbf{HI}(\mathbf{T}(\mathbf{x}))=\mathbf{HI}(\mathbf{x}).$$
    
    \item[b.] Let $g$ be a one-to-one transformation of the interval $\mathcal{I}$ to $\mathcal{I}$. Then, $$\mathbf{EI}(\mathbf{x}(g))=\mathbf{EI}(\mathbf{x}), \ \mathrm{and,}$$
    $$\mathbf{HI}(\mathbf{x}(g))=\mathbf{HI}(\mathbf{x}).$$
\end{itemize}
\end{proposition}
 
The following proposition establishes similar properties as those mentioned in Proposition~\ref{p1}, but now for the generalized indices. \newline

\begin{proposition} \label{p1_2}
The \textbf{MEI} and \textbf{MHI} with respect to a set of functions $\mathbf{x_1}(t), $ ... $\mathbf{x_n}(t)$ are invariant under the following transformations:
\begin{itemize}
    \item[a.] Let $\mathbf{T}(\mathbf{x})$ be the transformation function defined as $\mathbf{T}(\mathbf{x}(t))=\mathbf{A}(t)\mathbf{x}(t)+\mathbf{b}(t),$ where $t \in \mathcal{I}$ and $\mathbf{A}(t) =\mbox{diag}(A_1(t), \dots, A_p(t))$ is a $p \times p$ matrix with $A_{j}(t)>0$ and $\mathbf{b}(t) \in C(\mathcal{I}, \mathbb{R}^p)$. Then, $$\mathbf{MEI}(\mathbf{T}(\mathbf{x}))=\mathbf{MEI}(\mathbf{x}), \ \mathrm{and,}$$
    $$\mathbf{MHI}(\mathbf{T}(\mathbf{x}))=\mathbf{MHI}(\mathbf{x}).$$
    
    \item[b.] Let $g$ be a one-to-one transformation of the interval $\mathcal{I}$ to $\mathcal{I}$. Then, $$\mathbf{MEI}(\mathbf{x}(g))=\mathbf{MEI}(\mathbf{x}), \ \mathrm{and,}$$
    $$\mathbf{MHI}(\mathbf{x}(g))=\mathbf{MHI}(\mathbf{x}).$$
\end{itemize}
\end{proposition}

The proposition below considers these indices as a measure of extremality. The objective is to demonstrate that these indices are suitable for ordering functions, as discussed in Section~\ref{order}. Specifically, \textbf{EI} arranges the sample of functions from bottom (\textbf{EI} equal to 0) to top (\textbf{EI} equal to 1). On the other hand, for \textbf{HI}, 1-\textbf{HI} is considered, where a value of 1 implies that there are no curves below it. Consequently, this index orders functions from top (1-\textbf{HI} equal to 0) to bottom (1-\textbf{HI} equal to 1). \newline

\begin{proposition} \label{p2}
The following results concerning the convergence of the maximum between $\mathbf{EI}$ and $1-\mathbf{HI}$ hold:

$$\underset{\min_{k=1,...,p}\norm{ x_k }_{\infty}\geq M}{\sup} \max \{\mathbf{EI}(\mathbf{x}, P_{\mathbf{X}}), 1-\mathbf{HI}(\mathbf{x}, P_{\mathbf{X}})\} \to 1, \ \text{when} \ M \to \infty, $$ and 
$$\underset{\min_{k=1,...,p}\norm{ x_k }_{\infty}\geq M}{\sup} \max \{\mathbf{EI}_n(\mathbf{x}), 1-\mathbf{HI}_n(\mathbf{x})\} \overset{a.s}{\to} 1, \ \text{when} \ M \to \infty$$
where $\norm{ x_k }_{\infty}$ is the supreme norm of the $kth$ component of $\mathbf{x}$.
\end{proposition}

From the strong law of large numbers, the strong consistency of $\textbf{EI}_n$ and $ \textbf{HI}_n$ to \textbf{EI} and \textbf{HI}, respectively follows inmediately. Proposition~\ref{p3} states this result. \newline

\begin{proposition} \label{p3} $\mathbf{EI}_n$ and $\mathbf{HI}_n$ are pointwise strongly consistent, meaning that

\begin{itemize}
    \item[a.] $\mathbf{EI}_n$ is strongly consistent. $$\mathbf{EI}_n(\mathbf{x}) \overset{a.s}{\to} \mathbf{EI}(\mathbf{x}, P_{\mathbf{X}}), \ as \ n \to \infty.$$
    \item[b.] $\mathbf{HI}_n$ is strongly consistent. $$\mathbf{HI}_n(\mathbf{x}) \overset{a.s}{\to} \mathbf{HI}(\mathbf{x}, P_{\mathbf{X}}) , \ as \ n \to \infty.$$
\end{itemize}
\end{proposition}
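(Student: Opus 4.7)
My plan is to recognize both $\mathbf{EI}_n(\mathbf{x})$ and $\mathbf{HI}_n(\mathbf{x})$ as empirical means of bounded i.i.d.\ random variables, and then invoke Kolmogorov's strong law of large numbers. For part (a), I would fix $\mathbf{x}\in C(\mathcal{I},\mathbb{R}^p)$ and, for each $i=1,\ldots,n$, introduce
$$Y_i = \prod_{k=1}^p I\{X_{ik}(t) \geq x_k(t), \ \textit{for all} \ t \in \mathcal{I}\},$$
so that $\mathbf{EI}_n(\mathbf{x}) = 1-\tfrac{1}{n}\sum_{i=1}^n Y_i$. Since $\mathbf{X_1},\ldots,\mathbf{X_n}$ are i.i.d.\ from $P_{\mathbf{X}}$, the $Y_i$ inherit independence and identical distribution; they are $\{0,1\}$-valued with common mean $E[Y_1] = P\bigl(\bigcap_{k=1}^p\{X_k(t)\geq x_k(t),\,t\in\mathcal{I}\}\bigr)$, which by the definition of the population index equals $1-\mathbf{EI}(\mathbf{x},P_{\mathbf{X}})$. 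SLLN applied to the bounded i.i.d.\ sequence $(Y_i)$ then delivers $\mathbf{EI}_n(\mathbf{x})\overset{a.s}{\to}\mathbf{EI}(\mathbf{x},P_{\mathbf{X}})$.

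Part (b) would be handled by an identical argument, replacing $Y_i$ with $Z_i = \prod_{k=1}^p I\{X_{ik}(t) \leq x_k(t), \ \textit{for all} \ t \in \mathcal{I}\}$ and using that $\mathbf{HI}_n(\mathbf{x}) = \tfrac{1}{n}\sum_{i=1}^n Z_i$, so the empirical mean converges almost surely to $E[Z_1]=\mathbf{HI}(\mathbf{x},P_{\mathbf{X}})$.

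The only step that really requires attention, and thus the main (mild) obstacle, is verifying the measurability of the events defining $Y_i$ and $Z_i$, since these are statements about an uncountable supremum/infimum of $X_{ik}$ over $\mathcal{I}$. Because the sample paths lie in $C(\mathcal{I},\mathbb{R}^p)$ and $\mathcal{I}$ is compact, the event $\{X_{ik}(t)\geq x_k(t),\ \textit{for all}\ t\in\mathcal{I}\}$ coincides with $\bigl\{\inf_{t\in\mathcal{I}}(X_{ik}(t)-x_k(t))\geq 0\bigr\}$, and by continuity with the countable intersection $\bigcap_{t\in\mathcal{I}\cap\mathbb{Q}}\{X_{ik}(t)\geq x_k(t)\}$, which is measurable. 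Once this point is settled, the conclusion is an immediate application of SLLN to bounded i.i.d.\ random variables, making this the most elementary of the propositions in the subsection.
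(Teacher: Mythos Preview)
Your proposal is correct and is exactly the approach taken in the paper, which states only that the result is a consequence of the law of large numbers. Your write-up in fact supplies more detail than the paper does, including the measurability check via continuity of sample paths, so nothing further is needed.
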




Finally, a comparison of the outputs/orderings given by MEI as given by Equation~\eqref{mei_1d}  for each dimension, \textbf{MEI} as given by Equation~\eqref{mei1}, and weight-based definition of the multivariate indices ($\rho$\textbf{MEI}) as given by Equations~\eqref{rhomei} is now given based on a toy example with six curves in two dimensions, represented in Fig.~\ref{curves_MEI_order}, which corresponds to Equation~\eqref{toy_example}. The figure on the left illustrates the first dimension of the curves, while the one on the right displays the second dimension. Each color corresponds to a distinct function, which facilitates a clear understanding of the association between the curves in the first dimension and those in the second dimension.

\begin{figure}[ht]
\centering
\includegraphics[width=\textwidth]{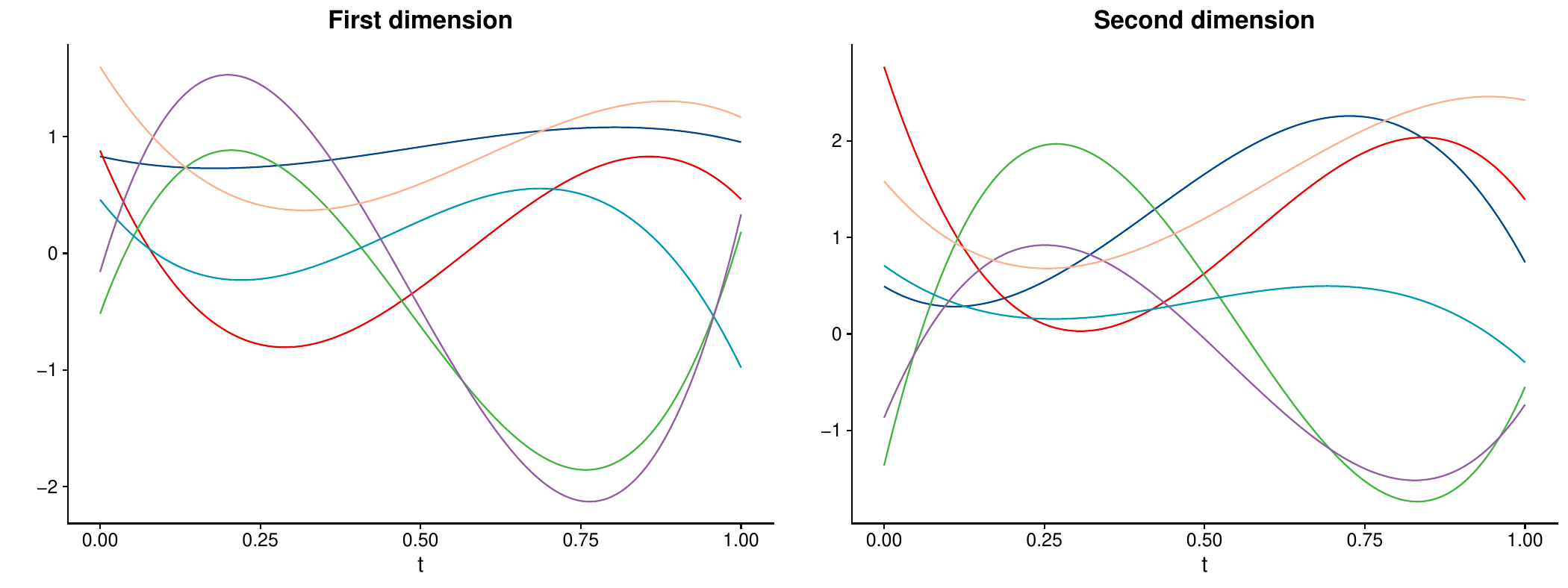}
\caption{Six distinct two-dimensional curves, each distinguished by a distinct color.  The left side showcases dimension 1, while the right side displays dimension 2. The colors when functions arranged from bottom to top when $t=0$, are green, purple, cyan, red, blue and orange in the first dimension, and green, purple, blue, cyan, orange and red in the second dimension.
\label{curves_MEI_order}}
\end{figure}

Table~\ref{order} provides a color assignment of the orderings obtained by the previously mentioned indices, ranging from bottom to top. These indices provide an ordering of data from top to bottom. When MEI is applied to each data component individually, it results in different orderings for each dimension, neglecting the interrelations among them. In contrast, \textbf{MEI} and $\rho$\textbf{MEI} offer unified orderings for the entire dataset. The main difference between them is that \textbf{MEI} takes interrelations between component into account, while $\rho$\textbf{MEI} is a weighted average of the individual indices.


\begin{table}[ht]

\centering
\begin{tabular}{cccc}
\toprule
 Ordering & (MEI1, MEI2) & \textbf{MEI} & \textbf{uMEI}\\
\midrule
1 & ({\color{c3}Green}, {\color{c5} Purple}) & {\color{c4} Cyan} & {\color{c5}Purple}\\
2 & ({\color{c2}Red}, {\color{c4} Cyan}) & {\color{c3} Green} & {\color{c4}Cyan}\\
3 & ({\color{c4}Cyan}, {\color{c3} Green}) & {\color{c5} Purple} & {\color{c3}Green}\\
4 & ({\color{c5}Purple}, {\color{c2} Red}) & {\color{c2} Red} & {\color{c2}Red}\\
5 & ({\color{c6}Orange}, {\color{c1}Blue}) & {\color{c1}Blue} & {\color{c1}Blue} \\
6 & ({\color{c1}Blue}, {\color{c6}Orange}) & {\color{c6}Orange} & {\color{c6}Orange}\\
\bottomrule
\end{tabular}
\caption{Color assignment for index values (1-6) indicating the ranking from the lowest value (top row) to the highest value (bottom row). The first column displays the MEI values for each component (ME1 for the first dimension and MEI2 for the second dimension), while the last two columns represent the multivariate indices \textbf{MEI} and \textbf{uMEI}. \label{order}}
\end{table}

When each dimension is considered independently, the resulting orderings differ, as shown in Table~\ref{order}. No curve holds the same position across both dimensions, underscoring the dissimilarity between them. In contrast, when all dimensions are considered together, the position assigned by the multivariate index may or may not coincide with the position of any univariate index in a particular dimension. This suggests that the extremeness of a curve depends on whether the interdependencies between dimensions are taken into account. Consequently, a curve may appear extreme in one dimension but not exhibit the same extremeness when evaluated multivariately.

Returning to the discussion of Fig.~\ref{curves_MEI_order}, the curve with the minimum \textbf{uMEI} is the purple one, which coincides to the minimum MEI in the second dimensions. In contrast, the curve with the minimum \textbf{MEI} is the cyan one. When these two curves are considered together, rather than independently, the cyan curve appears more extreme than the purple one. This underscores the importance of incorporating all dimensions of the curves into the index's definition.

In conclusion, this example highlights the significant impact of the chosen index on the resulting orderings.

\section{Clustering multivariate functional data} \label{clust}

The indices \textbf{MEI} and \textbf{MHI} naturally enable the adaptation of the methodology proposed in \citet{pulido2023} for clustering one-dimensional functional data to the multivariate context. In this section, we will provide an outline of this expansion, as well as an overview of various existing methods in the literature designed for clustering MFD. Subsequently, in the following sections, we will apply the proposed approach to various simulated and real datasets. Moreover, we will conduct a comparative analysis of the obtained results against those achieved by other established methodologies in the literature.

\subsection{EHyClus for multivariate functional data}\label{ehyclus}
The methodology proposed in \citet{pulido2023}, known as EHyClus, consists of four main steps:
\begin{enumerate}
    \item \textbf{Prepare the functional data.} Fit a cubic spline basis. Obtain the first and second derivatives of the data.
    \item \textbf{Apply indices to the data.} The epigraph, the hypograph and their generalized versions are applied to the data and their derivatives.
    \item \textbf{Apply multivariate clustering techniques.} Different multivariate clustering techniques are applied to different combinations of data and indices.
    \item \textbf{Obtain the best clustering partition of the data.} Apply different metrics to identify the best result.
\end{enumerate}

This approach transforms the original functional dataset into a multivariate one by applying the epigraph and hypograph indices in one dimension to the original curves, along with their first and second derivatives. Then, different multivariate clustering approaches are fitted to that dataset. Finally, a clustering partition is obtained as the combination of different indices and one clustering methodology. 


In order to adapt EHyClus for the context of MFD, a modification is necessary in the second step, which involves applying indices to the data. This adjustment is needed to accommodate the multivariate dataset. There are several options for defining multivariate indices, including those introduced in this study (\textbf{MEI} and \textbf{MHI}), but also those proposed by \citet{ieva2020component} ($\rho$\textbf{MEI} and $\rho$\textbf{MHI}), with customizable weights or any other option.

In the one-dimensional case, the multivariate clustering techniques were applied to different combinations of the EI, HI and MEI of the curves and their first and second derivatives. Note that MHI was discarded because of the linear relation existing between MEI and MHI in practice. In the multivariate context, \textbf{EI} and \textbf{HI} are really restrictive and result, in almost all cases, in values so close to 1 and 0 respectively. This, added to the absence of a linear relation between \textbf{MEI} and \textbf{MHI} (see Section~\ref{multiv_ind}), leads to only consider \textbf{MEI} and \textbf{MHI}. A total of 15 different combinations of data, first and second derivatives with indices (Table~\ref{data}) were considered. In this table, the notation used can be expressed as (b).(c) where (b) represent the data combinations, being `\_' the original curves, `d' first derivatives and `d2' second derivatives, and (c) represents the indices that have been used. Once these 15 datasets are created, 12 different multivariate clustering techniques have been applied to each of them. These methods include different hierarchical clustering approaches with Euclidean distance, such as single linkage, complete linkage, average linkage and centroid linkage for calculating similarities between clusters, and Ward method \citep{hierarc}; k-means with Euclidean and Mahalanobis distances \citep{jain2010data}; kernel k-means (kkmeans) with Gaussian and polynomial kernels \citep{kkmenas}; spectral clustering (spc) \citep{spc} and support vector clustering (svc) with k-means and kernel k-means \citep{svc}. All these combinations result in 180 different cluster results denoted as (a).(b).(c) where (a) stands for the clustering method. See Table~\ref{comb}. 
To evaluate classification performance, three external validation strategies will be used: Purity, F-measure, and Rand Index (RI). These validation metrics are thoroughly explained in \citet{indexes} and \citet{rendon}. 

A key limitation of this methodology is its reliance on external validation metrics, which require ground truth data for calculation. To address this, an automated approach is proposed for selecting combinations of data and indices in real-world examples, using the percentage of distinct values per variable and the correlation between variables. 
The methodology is the following:
\begin{itemize}
    \item Calculate EI, HI, MEI and MHI (in one or multiple dimensions) on the data, first and second derivatives obtaining a 12 variables dataset.
    \item Discard those variables having less than 50\% of distinct values.
    \item Discard those variables with correlation greater than 75\%. 
\end{itemize}
The variables that have not been discarded are those used for EHyClus. When using this automated procedure, we will refer to it as auto-EHyClus. The only remaining decision in this approach concerns the choice of the clustering method. Based on the simulation study, k-means with Euclidean distance or spectral clustering are expected to perform particularly well. Fig.~\ref{auto-EHyClus-bp} presents the distribution of the difference between the RI obtained by auto-EHyClus, compared to the maximum RI among the 180 possible outcomes of EHyClus during 50 simulations of each data generation process (DGP). Each boxplot corresponds to one DGP among those in multiple dimensions considered in Section~\ref{simul}, and those in one dimension available in Section~4 in \citet{pulido2023}. The notation used in the boxplot to refer to each DGP is the one considered in these two works. A positive difference indicates that the index combination is not among the 180 outcomes, but improves the results. A negative difference means that the RI is worse. The fact that these differences are generally not positive suggests that the combination of data and indices, despite that not being all the possible combinations, are appropriately chosen. Moreover, examining the boxplots for different DGPs in both single and multiple dimensions, one can see that this difference has minimal impact, with the worst-case scenario showing a difference smaller than 0.25.

\begin{figure}[ht]
\centering
\includegraphics[width=\textwidth]{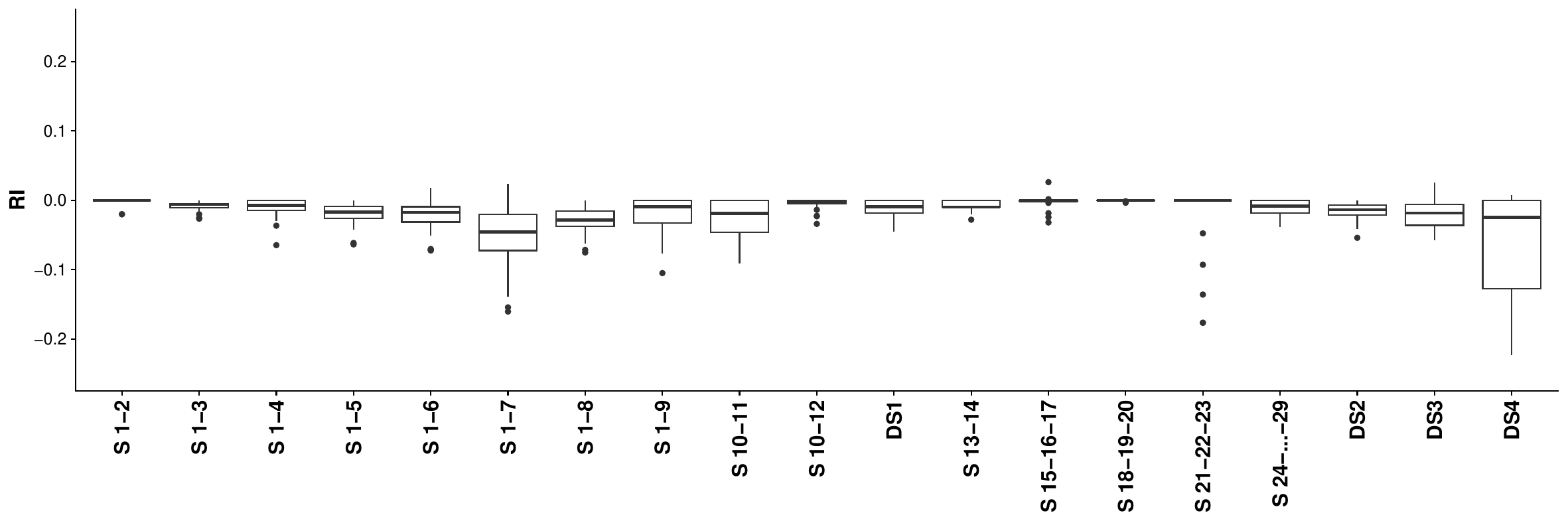}
\caption{Boxplot of the RI difference between auto-EHyClus and the best option among the 180 possibilities considered with EHyClus. \label{auto-EHyClus-bp}}
\end{figure}

\subsection{Clustering methods for multivariate functional data in the literature} \label{compar}

In this section, we present several existing approaches from the literature for clustering MFD. The outcomes of these approaches will be compared to the results of EHyClus. For benchmarking purposes, six distinct methods from the literature have been selected. Furthermore, to ascertain whether \textbf{MEI} and \textbf{MHI} offer more insights about the data compared to $\rho$\textbf{MEI} and $\rho$\textbf{MHI}, EHyClus as explained in Section~\ref{ehyclus}, has also been tested applying \textbf{uMEI} and \textbf{uMHI} (uniform weights) and \textbf{cMEI} and \textbf{cMHI} (weights based on the covariance matrices).

The first method for benchmarking is funclust algorithm, from \verb|Funclustering| R package, fully explained in \citet{jacques2014}. It is the first model-based approach for clustering MFD in the literature. This approach applies multivariate functional principal component analysis to the data, to posteriorly fit a parametric mixture model based on the assumption of normality of the principal component scores. One of the weaknesses of this strategy is that only a given proportion of principal components is modeled, leading to ignore some available information. This limitation is overcame by funHDDC algorithm, fully explained in \citet{schmutz2020}, and available in the \verb|funHDDC| R package. This methodology extends the latter by modeling all principal components with estimated variance different from zero. The next methodology is the FGRC method, described in \citet{yamamoto2017}. This strategy proposes a clustering method for MFD which combines a subspace separation technique with functional subspace clustering. It tries to avoid the clustering process to be affected by the variances among functions restricted to regions that are not related to true cluster structure. Then, kmeans-d1 and kmeans-d2 are two approaches described in \citet{ieva2013depth}. They are two different implementations of k-means, which basically differ in the distance considered between the multivariate curves. kmeans-d1 uses the norm in the Hilbert space $L^2(\mathcal{I},\mathbb{R}^p)$, while kmeans-d2 considers the norm in the Hilbert space $H^1(\mathcal{I},\mathbb{R}^p)$. Finally, the methodology proposed in \citet{martino} and available in the R package \verb|gmfd|, is also tested. This one is also based on k-means clustering, but in this case, a generalized Mahalanobis distance for functional data, $d_{\rho}$ where the value of $\rho$ has to be set in advance is employed. 

In this work, we will refer to these six techniques respectively as: funclust, funHDDC, FGRC, kmeans-d1, kmeans-d2 and gmfd-kmeans. Finally, EHyClus will refer to the methodology proposed in this work using \textbf{MEI} and \textbf{MHI}. EHyClus-mean will denote EHyClus with \textbf{uMEI} and \textbf{uMHI}, and EHyClus-cov will consider the use of \textbf{cMEI} and \textbf{cMHI}. Note that for the three options with EHyClus, the best result when considering external metrics is the one given in the tables in the next section. The small differences reflected in Fig.~\ref{auto-EHyClus-bp} make it possible to consider these top results in simulations.

\section{Simulation study} \label{simul}
This section encompasses different DGPs to illustrate the performance of the proposed methodology and to compare it with the existing approaches in the literature, explained in Section~\ref{compar}. These experiments serve to demonstrate the behavior and effectiveness of the proposed methodology in contrast to some other approaches available for clustering MFD.
Four different DGPs are simulated for this purpose, two (DS1 and DS2) with two groups, and another two (DS3 and DS4) with four groups. These DGPs are simulated 100 times and the mean results are presented. In the case of EHyClus, the best result based on the various metrics considered is the one displayed.

DS1 first appears in \citet{martino}, and is the extension of a one-dimensional example considered in the same work, which has also been employed in \citet{pulido2023} for clustering functional data in one dimension. It consists of two functional samples of size 50 defined in $[0,1]$, with continuous trajectories generated by independent stochastic processes in $L^2(\mathcal{I}^2)$. Each component of the curve is evaluated in 150 equidistant observations in the interval $[0,1]$. 

The 50 functions of the first sample are generated as follows: 
\begin{equation} \label{toy_example}
    \mathbf{X}_1(t)=\mathbf{E}_1(t)+ \sum_{k=1}^{100} \mathbf{Z}_k\sqrt{\rho_k}\theta_k(t),
\end{equation}
where $\mathbf{E}_1(t)=
\begin{pmatrix}
    t(1-t)\\
    4t^2(1-t)
\end{pmatrix}$
 is the mean function of this process, $\{\mathbf{Z}_k, k=1,...,100\}$  are independent bivariate normal random variables, with mean $\mathbf{\mu = 0}$ and covariance matrix 
 $\Sigma=
 \begin{pmatrix}
    1 & 0.5\\
    0.5 & 1
\end{pmatrix},$
 and $\{ \rho_k,k\geq 1 \}$ is a positive real numbers sequence defined as $$\rho_k = \left\{
	       \begin{array}{lll}
		 \frac{1}{k+1}      & if & k \in \{1,2,3\}, \\
		 \frac{1}{{(k+1)}^2} & if & k \geq 4,
	       \end{array}
	     \right. $$
in such a way that the values of $\rho_k$ are chosen to decrease faster when $k\geq 4$ in order to have most of the variance explained by the first three principal components. Finally, the sequence $\{\theta_k, k\geq 1\}$ is an orthonormal basis of $L^2(I)$ defined as$$\theta_k(t) = \left\{
	       \begin{array}{lllll}
		 I_{[0,1]}(t)    & \text{if} & k=1, &  \\
		 \sqrt{2}\sin{(k\pi t)}I_{[0,1]}(t) & \text{if} & k \geq 2,\\ & & k \ \text{even},\\
		 \sqrt{2}\cos{((k-1)\pi t)}I_{[0,1]}(t) & \text{if} & k \geq 3,\\ & & k \ \text{odd},
	       \end{array}
	     \right. $$
where $I_A(t)$ stands for the indicator function of set $A$.

The 50 functions of the second sample are generated by $$\mathbf{X}_2(t)=\mathbf{E}_2(t)+ \displaystyle \sum_{k=1}^{100} \mathbf{Z}_k\sqrt{\rho_k}\theta_k(t),$$ where $\mathbf{E}_2(t)=\mathbf{E}_1(t)+\mathbf{1} \displaystyle \sum_{k=4}^{100}\sqrt{\rho_k}\theta_k(t),$ is the mean function of this process, where $\mathbf{1}$ represents a vector of 1s.

The first step of EHyClus consists of smoothing the data with a cubic spline basis to remove noise and to be able to use its first and second derivatives. A sensitivity analysis regarding the best number of basis was carried out in \citet{pulido2023}, leading to the conclusion that a number of basis between 30 and 40 should be considered. In this section, as well as in that work, the number of basis is set to 35.

As shown in Fig.~\ref{martino21curves}, there is a significant overlap between the two groups in both dimensions, making it challenging to distinguish them visually.  However, upon examining the indices depicted in Fig.~\ref{martino21ind}, it becomes evident that the two groups can be discerned. This figure illustrates the utilization of \textbf{MEI} and \textbf{MHI} over the first derivatives. This representation has been executed in a two-dimensional format to enhance clarity of visualization, even though the best approach for DS1 includes four variables (\textbf{MEI} and \textbf{MHI} for both the first and second derivatives). 

\begin{figure}[ht]
\centering
\includegraphics[width=\textwidth]{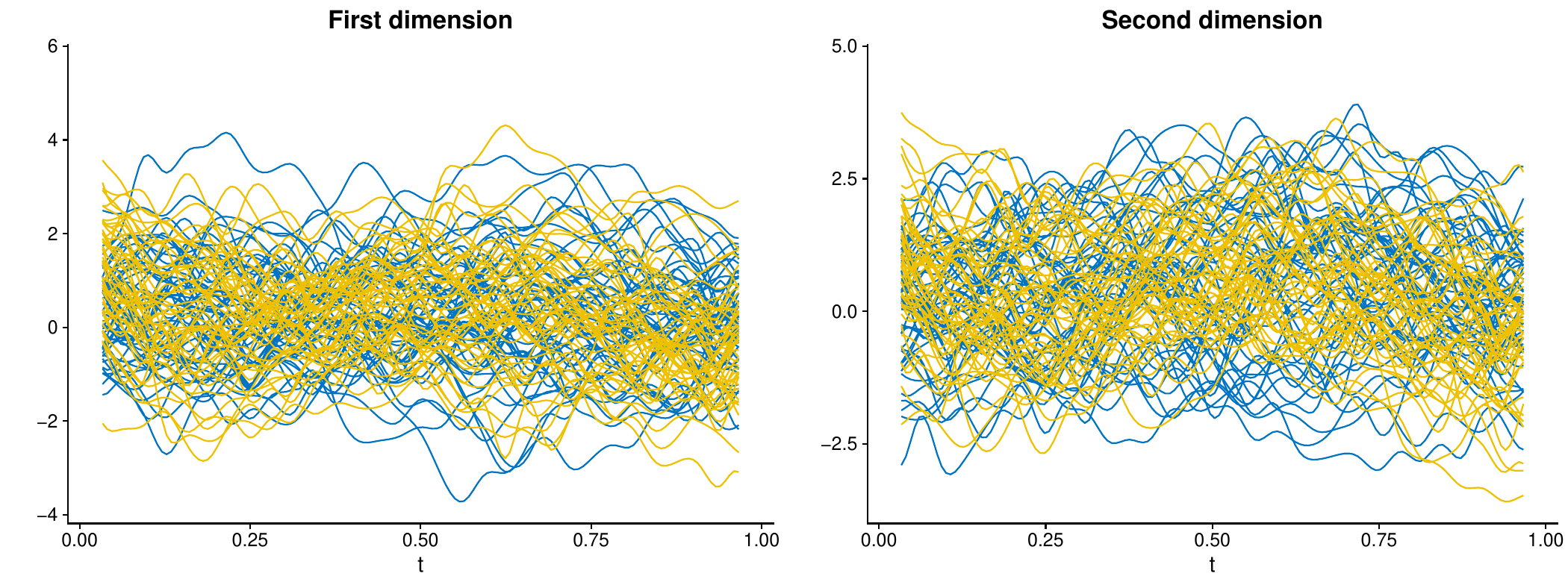}
\caption{DS1 data. Dimension 1 (left panel) and 2 (right panel).  \label{martino21curves}}
\end{figure}

\begin{figure}[ht]
\centering
\includegraphics[width=0.4\textwidth]{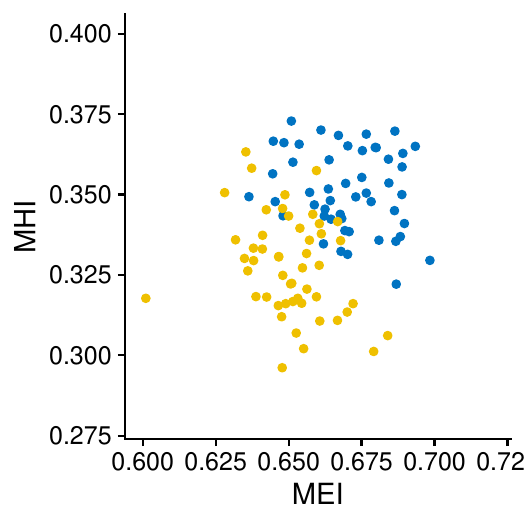}
\caption{Scatter plot of the \textbf{MEI} and the \textbf{MHI} of the first derivatives of DS1. \label{martino21ind}}
\end{figure}

The best approach of EHyClus, using the proposed indices, involves k-means with Euclidean distance, achieving a mean RI of 0.9698, as shown in Table~\ref{ds1comp}. Additionally, all the existing methods reviewed in Section~\ref{compar} are applied to DS1, and their mean results are also presented in the same table. It should be noted that this table reflects the best approach for each methodology when multiple options are available. Among these methods, kmeans-d2 achieves the highest value, 0.9009, which is approximately 0.07 units lower than EHyClus's best result. The next best method is Funclust, with a value of 0.8198, around 0.15 units lower than the proposed approach in this document. The remaining methods do not yield competitive results in terms of RI. This is clearly illustrated in Fig.~\ref{martino21boxplot}, which shows the distribution of RI for each method. While Funclust achieves a relatively high mean, it exhibits significant dispersion. Additionally, kmeans-d2 has a lower median than EHyClus, despite having a higher mean. These results suggest that EHyClus, with the proposed indices, is a competitive approach.

\begin{table}

\centering
\begin{tabular}{lcccc}
\toprule
  & Purity & Fmeasure & RI & Time \\
\midrule
EHyClus & 0.9846 & 0.9695 & 0.9698 & 0.00262 \\
EHyClus-mean & 0.7243 & 0.5986 & 0.6005 & 0.0106\\
EHyClus-cov & 0.7237 & 0.5977 & 0.5997 & 0.0104 \\
Funclust & 0.8563 & 0.8197 & 0.8198 & 1.3277 \\
funHDDC & 0.5810 & 0.5217 & 0.5157 & 3.6154 \\
FGCR & 0.5749 & 0.5070 & 0.5063 & 0.2275 \\
kmeans-d1 & 0.5635 & 0.4021 & 0.5034 & 0.1244\\
kmeans-d2 & 0.9964 & 0.8878 & 0.9009 & 0.1211 \\
gmfd-kmeans & 0.7400 & 0.6949 & 0.6678 & 3.3498 \\
\bottomrule
\end{tabular}
\caption{Mean values for DS1 of Purity, F-measure, Rand Index (RI) and execution time for EHyClus and all the competitors models on 100 simulations. \label{ds1comp}}
\end{table}

\begin{figure}[ht]
\centering
\includegraphics[width=0.8\textwidth]{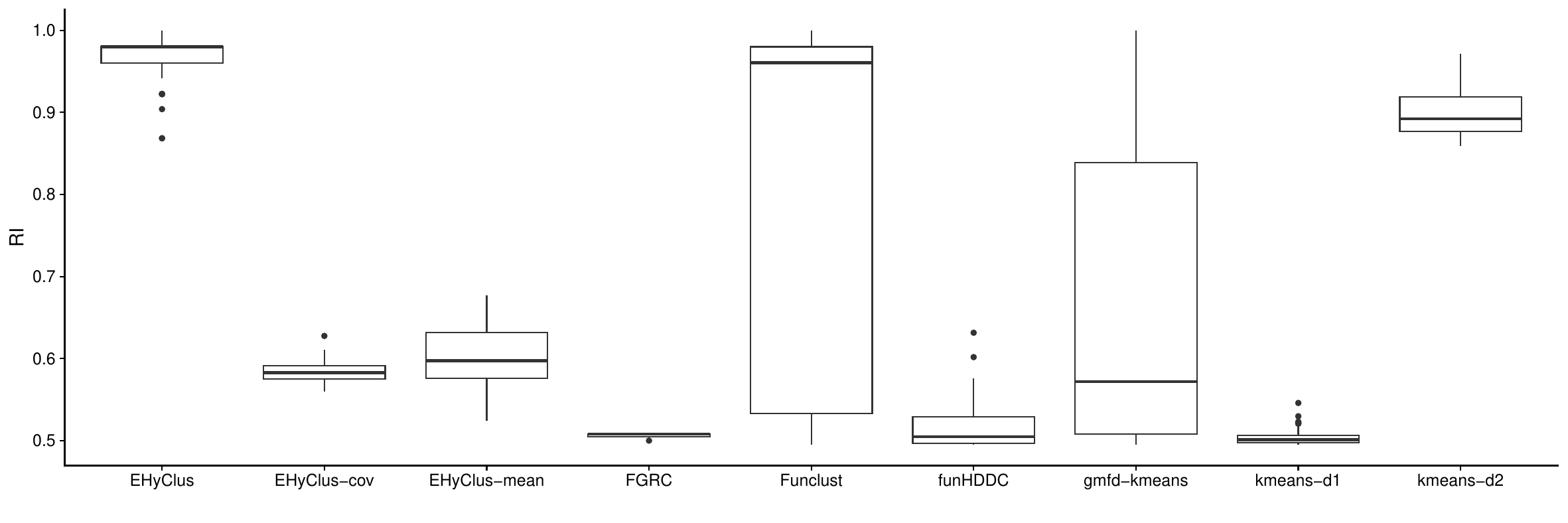}
\caption{Boxplot of the RI for DS1 over 100 simulation runs of EHyClus and its competitors. \label{martino21boxplot}}
\end{figure}

The second DGP (DS2) is based on a bivariate dataset with two groups appearing in \citet{jacques2014}. 
In this case, we consider 100 bivariate curves, with each component observed at 1001 equidistant points over the interval $[1,21]$. The first cluster consists of 50 functions generated by $X_{11}$ for the first dimension and $X_{12}$ for the second dimension. Similarly, the second cluster also comprises 50 functions generated by $X_{21}$ and $X_{22}$ for the first and second dimensions, respectively.
\begin{align*}
    X_{11}(t) = &-5+t/2+U_2h_3(t)+U_3h_2(t)+\sqrt{0.1}\epsilon(t), \\ 
    X_{12}(t) = & - 5+t/2+U_1h_1(t)+U_2h_2(t)+U_3h_3(t)+\sqrt{0.5}\epsilon(t) \\
     X_{21}(t) =& U_3h_2(t)+\sqrt{10}\epsilon(t), \\
    X_{22}(t) =&U_1h_1(t)+U_3h_3(t)+\sqrt{0.5}\epsilon(t),
\end{align*}
where $U_1 \sim \mathcal{U}(0.5,1/12)$, $U_2 \sim \mathcal{U}(0,1/12)$ and $U_3 \sim \mathcal{U}(0,2/3)$ are independent Gaussian variables and $\epsilon(t)$ represents a white noise independent of $U_i,$ $i=1,2,3,$ with unit variance. The functions $h_1$, $h_2$ and $h_3$ are defined as $h_1(t)=(6-|t-11|)_+$, $h_2(t)=(6-|t-7|)_+$ and $h_3(t)=(6-|t-15|)_+$, being $(\centerdot)_+$ the positive part. 






When applying EHyClus to DS2, more than 15 different combinations of data, indices, and clustering methods achieve perfect results across all three metrics: Purity, F-measure, and RI. All the combinations that lead to these perfect results include indices applied to the first derivatives, with some also incorporating indices from the second derivatives. Furthermore, a variety of clustering methods, including hierarchical options, k-means with Euclidean distance, and spectral clustering, are able to achieve these perfect outcomes. 

In contrast, when applying the seven methodologies used for comparison, only three can match the performance of EHyClus. As shown in Table~\ref{DS2comp}, only EHyClus-mean, EHyClus-cov, and funHDDC are competitive with EHyClus. However, funHDDC has significantly higher execution times compared to EHyClus under any index definition. Furthermore, when using alternative index definitions within EHyClus, more than 15 combinations again achieve perfect results, reaffirming that EHyClus consistently outperforms other methods for clustering MFD.

\begin{table}[ht]

\centering
\begin{tabular}{lcccc}
\toprule
  & Purity & Fmeasure & RI & Time \\
\midrule
EHyClus & 1.0000 & 1.0000 & 1.0000& 0.00739 \\
EHyClus-mean & 1.0000 & 1.0000 & 1.0000 & 0.0003\\
EHyClus-cov &1.0000 & 1.0000 & 1.0000 & 0.0003 \\
Funclust & 0.8386 & 0.8254 & 0.8062 & 4.8313 \\
funHDDC & 0.9897 & 0.9808 & 0.9808 & 5.8811 \\
FGRC & 0.8228 & 0.7839 & 0.7836 & 8.8738 \\
kmeans-d1 & 0.7775 & 0.7153 & 0.7165 & 0.0578\\
kmeans-d2 & 0.7618 & 0.6662 & 0.6671 & 0.0606 \\
gmfd-kmeans & 0.7211 & 0.6872 & 0.6649 & 53.7121 \\
\bottomrule
\end{tabular}
\caption{Mean values for DS2 of Purity, F-measure, Rand Index (RI) and execution time for EHyClus and all the competitors models on 100 simulations. \label{DS2comp}}
\end{table}

Finally, Fig.~\ref{DS2_boxplot} represents the RI distribution of each of the best approaches for each of the eight considered methodologies. EHyClus always obtains a RI equal to 1 for the three definitions of indices available in Section~\ref{multiv_ind}. The methodology called funHDDC also obtains almost all values equal to 1 in the 100 simulations. Nevertheless, it presents some outliers with a smaller RI. This implies that this approach does not obtain a mean RI equal to 1 in Table~\ref{DS2comp}. The remaining five methodologies obtain much more disperse results, with means much smaller than the other three approaches. Overall, EHyClus seems to be the best approach in this case.

\begin{figure}[ht]
\centering
\includegraphics[width=0.8\textwidth]{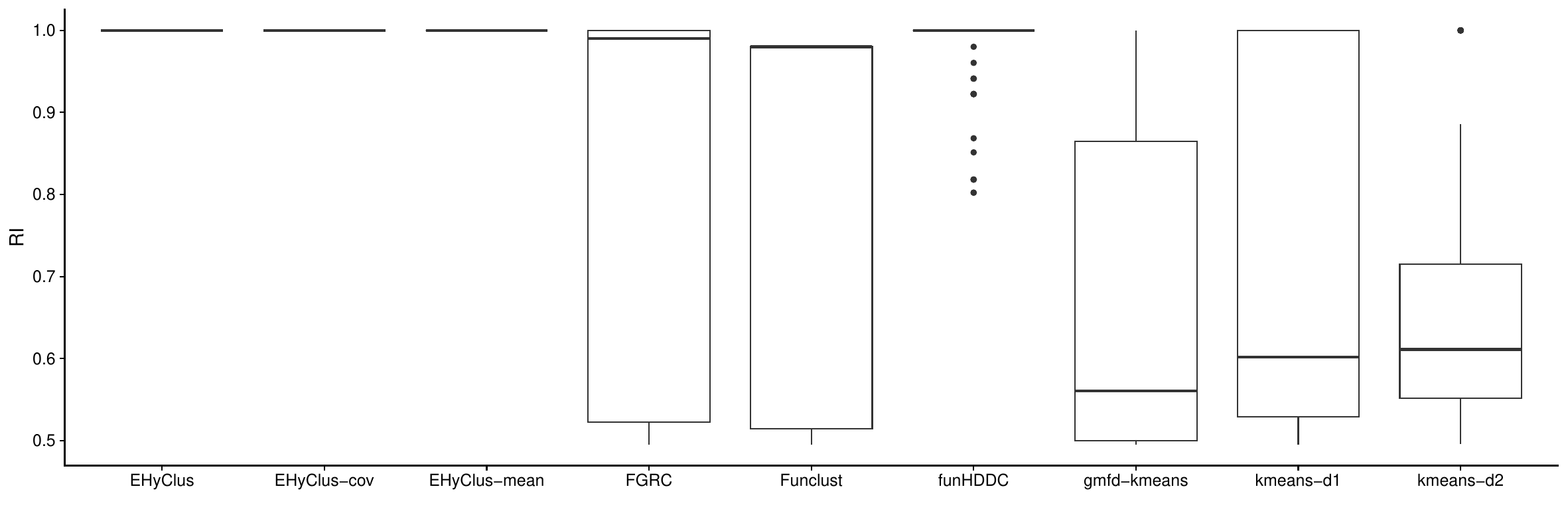}
\caption{Boxplot of the RI for DS2 over 100 simulation runs of EHyClus and its competitors. \label{DS2_boxplot}}
\end{figure}

The third DGP (DS3) has been previously considered by \citet{schmutz2020} to test their clustering algorithm. It is based on the data described in \citet{Bouveyron2015}, but changing the number of functions, the variance and adding a new dimension to the data. It consists of 1000 bivariate curves equally distributed in four different groups observed at 101 equidistant points of the interval $[1,21]$. Each cluster has this general form: $$X_1(t) = U + (A_1 - U)h_j(t) + \epsilon(t), \quad X_2(t) = U + (A_2 - U)h_k(t) + \epsilon(t)$$
where $U \sim \mathcal{U}(0,0.1)$, $\epsilon(t)$ represents a white noise independent of $U$ with variance equal to $0.25$, and the functions $h_1$ and $h_2$ are defined as 
\begin{equation}\label{h2}
    h_1(t)=(a_1-|t-7|)_+
\quad\mathrm{and}\quad
    h_2(t)=(a_2-|t-15|)_+,
\end{equation}
with $a_1=a_2=6$.

The constants $A_1$ and $A_2$ are specific for each cluster, and $j$ and $k$ denote the index of the function h(t). In this way, DS3 is obtained as follows: 

\begin{enumerate}
\centering
\item[Cluster 1.]  $ A_1 = 1, \, A_2 = 0.5, \, j = 1, \, k = 1$ \\
\item[Cluster 2.]  $ A_1 = 1, \, A_2 = 0.5, \, j = 2, \, k = 2$ \\
\item[Cluster 3.]  $ A_1 = 0.5, \, A_2 = 1, \, j = 1, \, k = 1$ \\
\item[Cluster 4.]  $ A_1 = 0.5, \, A_2 = 1, \, j = 2, \, k = 2$
\end{enumerate}

The curves and first derivatives, when applying 35 cubic splines, are represented in Figures~\ref{DS3_firstdim} and~\ref{DS3_sndim}.

\begin{figure}[ht]
\centering
\includegraphics[width=\textwidth]{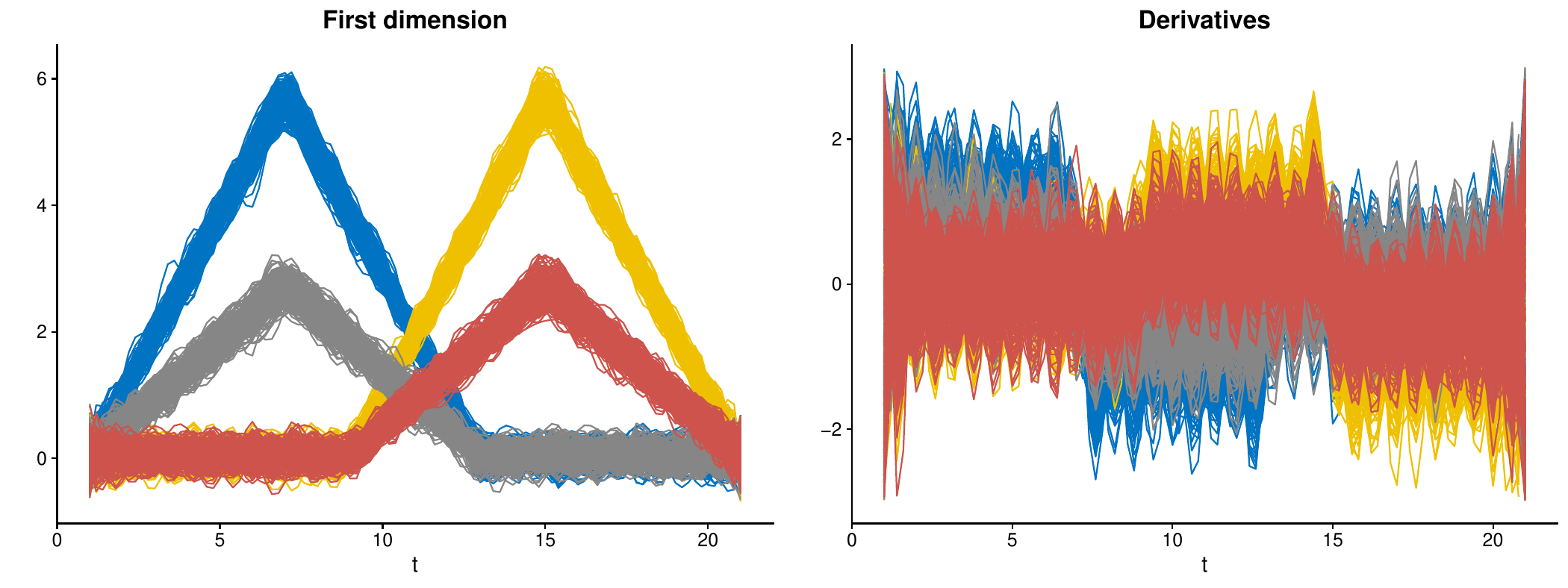}
\caption{Dimension 1 of DS3 data. Original curves (left) and first derivatives (right). \label{DS3_firstdim}}
\end{figure}

\begin{figure}[ht]
\centering
\includegraphics[width=\textwidth]{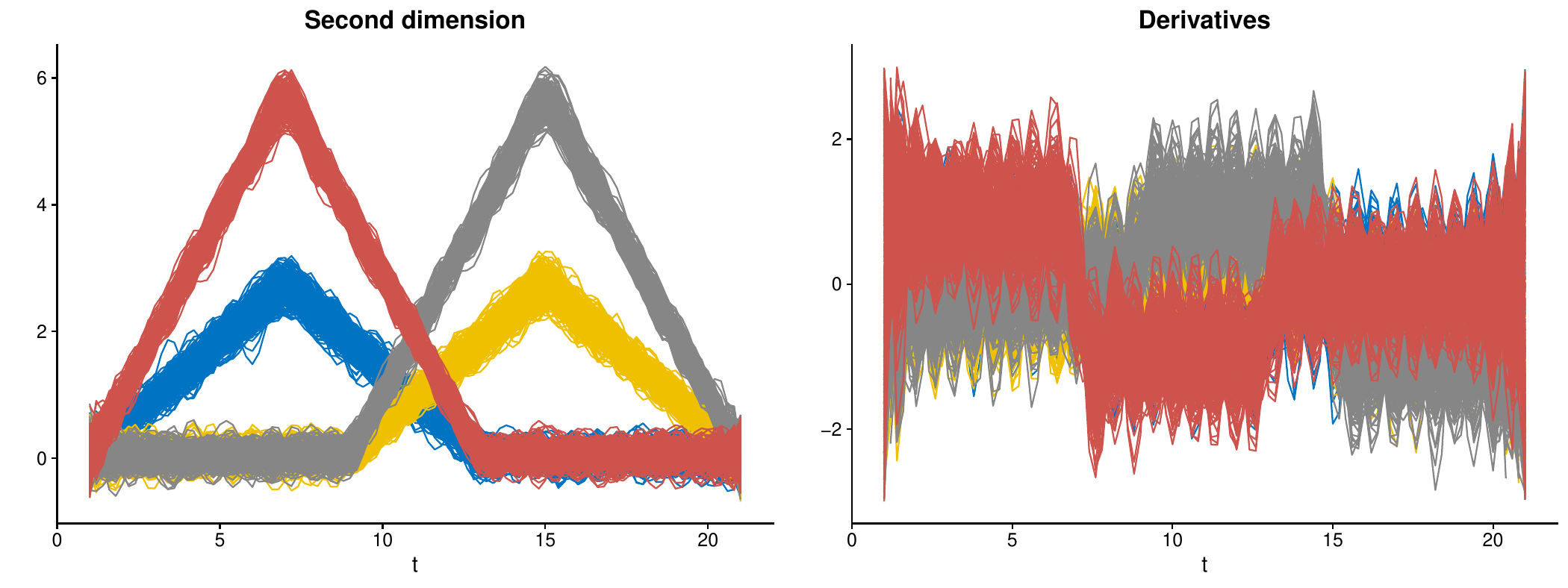}
\caption{Dimension 2 of DS3 data. Original curves (left) and first derivatives (right).  \label{DS3_sndim}}
\end{figure}

EHyClus produces the most favorable result when operating on the derivatives, and not on the original curves, obtaining the optimal combination employing k-means clustering on \textbf{MEI} and \textbf{MHI} derived from the first derivatives of the data. This finding is unexpected, as an examination of the curves displayed in Figures~\ref{DS3_firstdim} and~\ref{DS3_sndim} reveals that the groups are more distinguishable in the original curves compared to the derivatives. 
This phenomenon may be attributed to the fact that, owing to the shape of the derivatives, the disparity in the number of curves situated below and above a particular one provides a more effective discriminative capacity than in the case of the original curves. It is noteworthy that the methodology introduced by \citet{schmutz2020}, funHDDC, achieves exceptionally high results, far from those obtained by all the other approaches. See Table~\ref{DS3ehycluscomp}. When comparing all alternatives to funHDDC (0.99844 mean RI), EHyClus is the next best approach (0.88277 mean RI), far from Funclust (0.83591 mean RI), which is the following best value. Note that the execution time of funHDDC is really high compared to all the other approaches.

\begin{table}

\centering
\begin{tabular}{lcccc}
\toprule
  & Purity & Fmeasure & RI & Time \\
\midrule
EHyClus & 0.81308 & 0.76515 & 0.88277 & 0.01423 \\ 
RHyClus-cov & 0.40440 & 0.29890 & 0.64860 & 0.00970 \\
EHyClus-mean & 0.39877 & 0.31863 & 0.65815 & 0.05580\\
Funclust & 0.70936 & 0.7674 & 0.83591 & 3.42640 \\
funHDDC & 0.99750 & 0.99737 & 0.99844 & 15.9790 \\
FGRC & 0.65151 & 0.63691 & 0.81316 & 6.38450 \\
kmeans-d1 & 0.49026 & 0.46950 & 0.73467 & 0.48181\\
kmeans-d2 & 0.35296 &  0.32660 & 0.66260 & 0.44237 \\
gmfd-kmeans & 0.48420 & 0.59104 & 0.66568 & 0.46780 \\
\bottomrule
\end{tabular}
\caption{Mean values for DS3 of Purity, F-measure, Rand Index (RI) and execution time for EHyClus and all the competitors models on 100 simulations. \label{DS3ehycluscomp}}
\end{table}

The preceding analysis conducted on DS3 was carried out using the dataset selected in \citet{schmutz2020}. The funHDDC methodology proposed in that research yielded remarkably high outcomes. To gain further understanding of how EHyClus operates with four groups, and to elucidate how funHDDC works in different scenarios, we believe it would be interesting to modify certain parameters in the formulation of DS3 and observe the resulting effects. Consequently, a new dataset, referred to as DS4, has been generated in the same way as DS3 with some changes in the considered parameters. In this case, $a_1=3$, $a_2=6$ and the four clusters are generated as follows:

\begin{align*}
    \text{Cluster 1. } & A_1 = 1.5, \ A_2 = 1, \ j = 1, \ k = 1 \\
    \text{Cluster 2. } & A_1 = 1, \ A_2 = 0.5, \ j = 2, \ k = 2 \\
    \text{Cluster 3. } & A_1 = 1, \ A_2 = 1, \ j = 1, \ k = 2 \\
    \text{Cluster 4. } & A_1 = 0.5, \ A_2 = 0.5, \ j = 2, \ k = 1
\end{align*}

Figures~\ref{DS4_firstdim} and~\ref{DS4_sndim} represent the original curves and first derivatives for the two dimensions of the data, that can be compared to those of DS3.

\begin{figure}[ht]
\centering
\includegraphics[width=\textwidth]{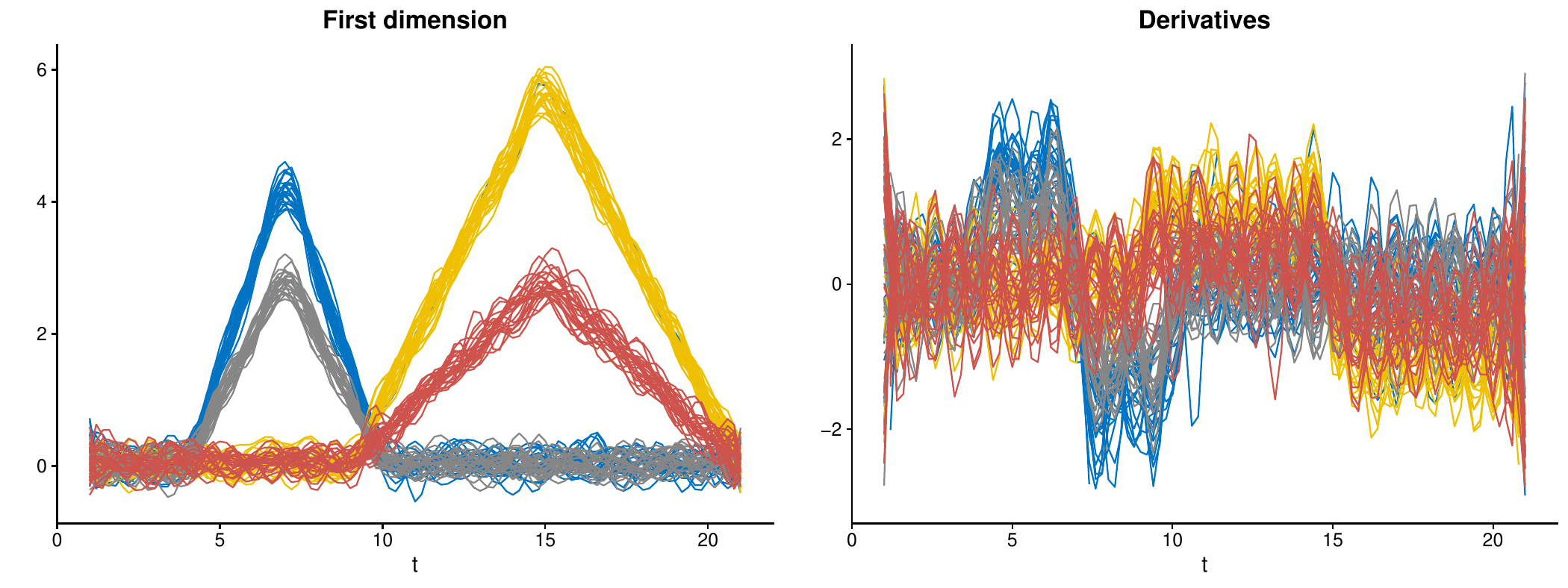}
\caption{Dimension 1 of DS4 data. Original curves (left panel) and first derivatives (right panel). \label{DS4_firstdim}}
\end{figure}

\begin{figure}[ht]
\centering
\includegraphics[width=\textwidth]{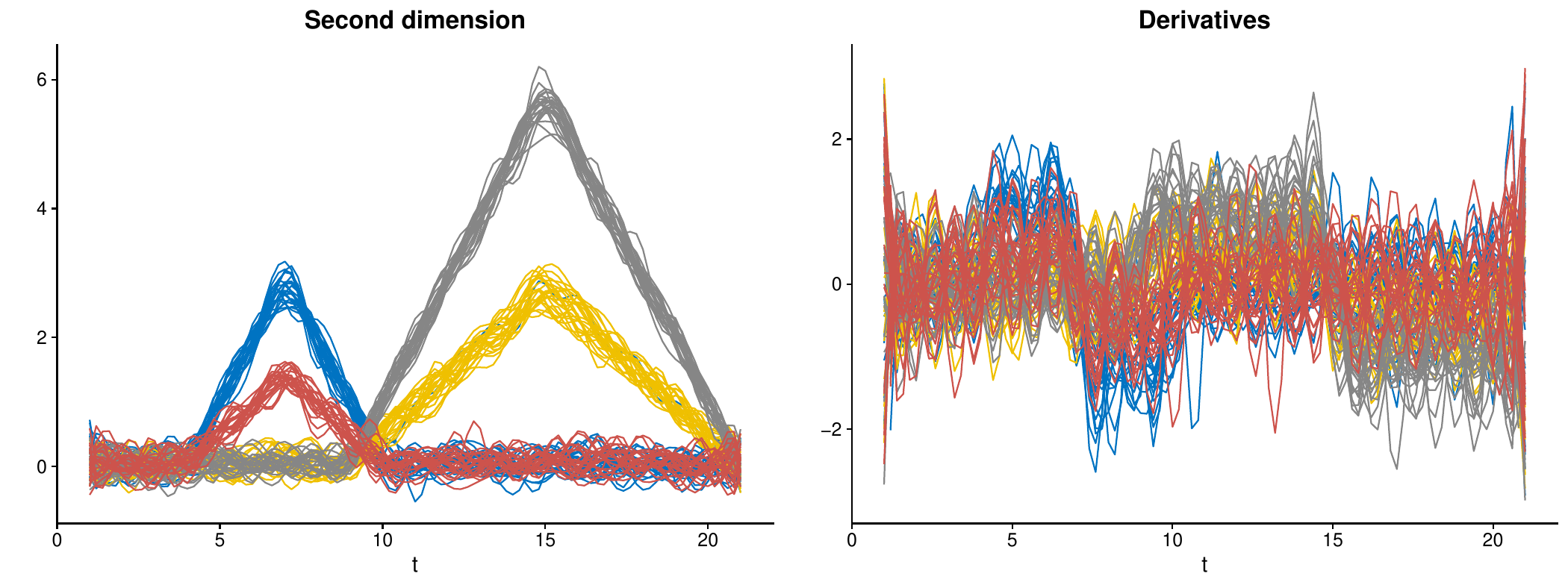}
\caption{Dimension 2 of DS4 data. Original curves (left panel) and first derivatives (right panel).  \label{DS4_sndim}}
\end{figure}

EHyClus obtains its best RI when applying k-means with Euclidean distance to \textbf{MEI} and \textbf{MHI} over data, first and second derivatives of DS4 data (0.9703 mean RI). Table~\ref{DS4comp} shows that, now, EHyClus outperforms funHDDC (0.8886 mean RI). EHyClus-mean and FGRC also are two approaches obtaining similar values as those achieved with funHDDC. 

\begin{table}

\centering
\begin{tabular}{lcccc}
\toprule
  & Purity & Fmeasure & RI & Time \\
\midrule
EHyClus & 0.9684 & 0.9392 & 0.9703 & 0.0080 \\
EHyClus-mean & 0.7382 & 0.6232 & 0.8142 & 0.0098\\
EHyClus-cov & 0.5813 & 0.4528 & 0.7186 & 0.0235 \\
Funclust & 0.6682 & 0.6986 & 0.7908 & 0.0914 \\
funHDDC & 0.8376 &  0.8187 & 0.8886 & 1.4595 \\
FGRC & 0.6772 & 0.6339 & 0.8163 & 0.11721 \\
kmeans-d1 & 0.3962 & 0.3136 & 0.6614 & 0.0261\\
kmeans-d2 & 0.4170 & 0.3350 & 0.6685 & 0.03113 \\
gmfd-kmeans & 0.7699 & 0.7389 & 0.8457 & 3.8594\\
\bottomrule
\end{tabular}
\caption{Mean values for DS4 of Purity, F-measure, Rand Index (RI) and execution time for all the competitors models on 100 simulations. \label{DS4comp}}
\end{table}

The distribution of RI for the nine methods, shown in Fig.~\ref{DS4_boxplot}, demonstrates that EHyClus is the best option, while funHDDC has the least dispersion. However, despite its higher dispersion, EHyClus produces much more accurate results than funHDDC in this case.

\begin{figure}[ht]
\centering
\includegraphics[width=0.8\textwidth]{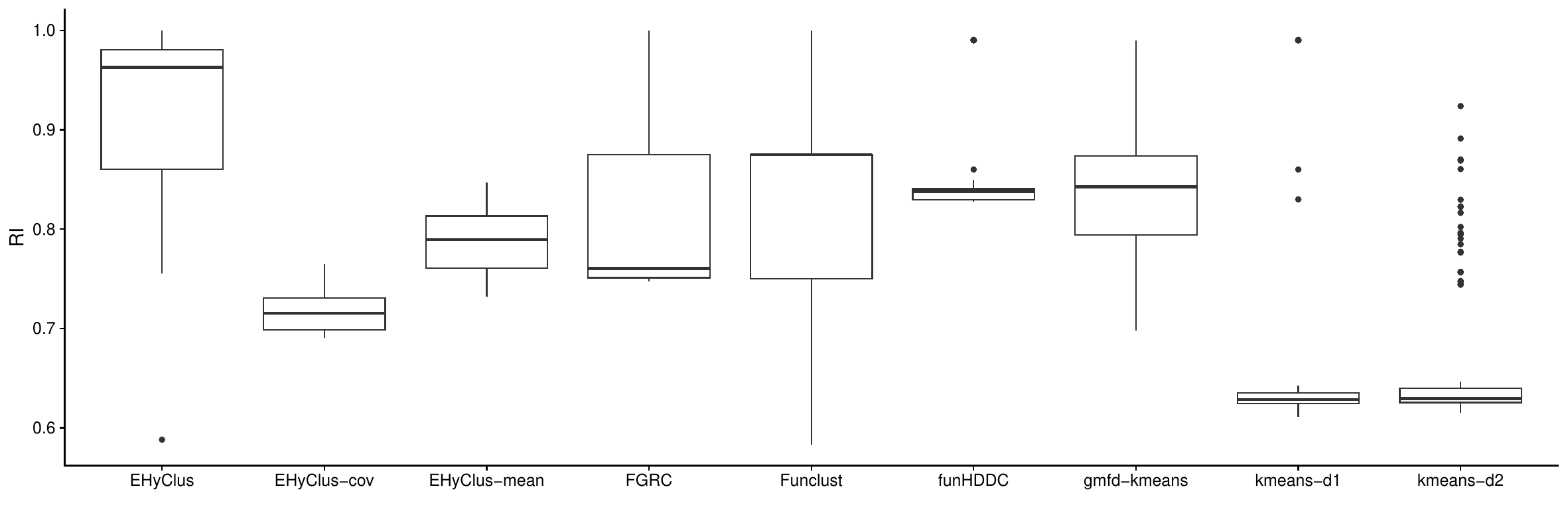}
\caption{Boxplot of the RI for DS4 over 100 simulation runs of EHyClus and its competitors. \label{DS4_boxplot}}
\end{figure}

In the combined analysis of DS3 and DS4, it is evident that distinct results arise depending on the model, despite their similar structures. In the case of DS3, funHDDC emerges as the superior procedure, exhibiting a significant performance advantage over the others. Conversely, in DS4, EHyClus takes the lead with a substantial margin compared to the other models. However, it is crucial to acknowledge that both strategies represent two highly effective approaches, with one outperforming the other in each respective case.

\section{Applications to real data} \label{real}

In this section, EHyClus for MFD is applied to two real datasets. The first is the widely studied Canadian Weather dataset, and the second is a dataset concerning air quality in Madrid.

\subsection{Canadian Weather data}

A popular real dataset in the FDA literature, included in \citet{ramsay} and in the \verb|fda| R-package, is the Canadian weather dataset. It contains the daily temperature and precipitation averaged over 1960 to 1994 at 35 different Canadian weather stations grouped into 4 different regions: Artic (3), Atlantic (15), Continental (12) and Pacific (5). The temperature and precipitation curves are represented in Fig.~\ref{cw_plot}, and the distribution of the 35 different stations in 4 regions is illustrated by Fig.~\ref{cw_map}. 

\begin{figure}[ht]
\centering
\includegraphics[width=\textwidth]{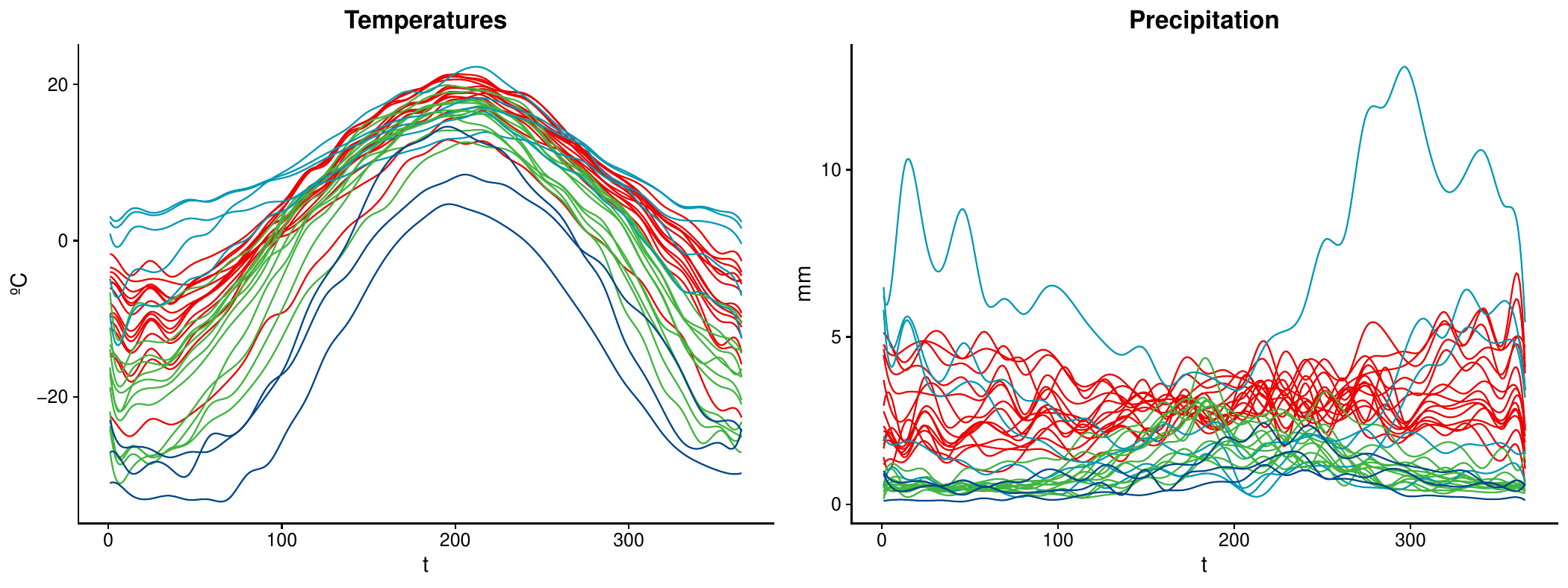}
\caption{Temperature and precipitation curves of 35 different Canadian weather stations, organized in four different climate zones. \label{cw_plot}}
\end{figure}

\begin{figure}[ht]
\centering
\includegraphics[width=0.55\textwidth]{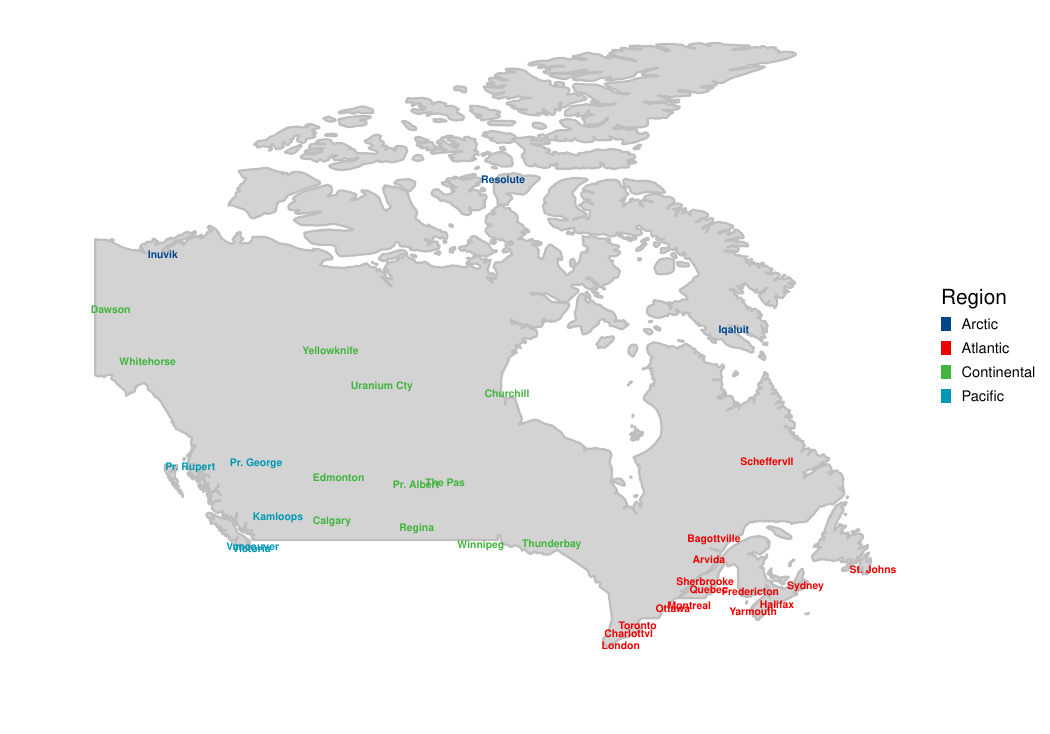}
\caption{Map of Canada with the name of the stations of four different regions represented with different colors. \label{cw_map}}
\end{figure}

In \citet{pulido2023}, EHyClus and some other cluster methodologies for functional data in one dimension were applied to cluster temperatures into four groups. The decision of generating four groups is based on the grouping in 4 regions given by the own dataset. This decision is also made in some other works, as \citet{jacques2014}, which provides a multivariate study in 4 clusters of temperature and precipitation. To do so, as the temperatures and precipitations are in different units, they normalize the data in order to properly work with it. In this paper, data normalization is unnecessary due to the utilization of \textbf{MEI} and \textbf{MHI}. These indices are applied to the curves, and consider the dimensionality of the data, respecting the units of the various dimensions when doing comparisons to the other curves. Consequently, the resultant values of \textbf{MEI} and \textbf{MHI} for a given curve are in the range between 0 and 1. As a result, the dataset derived from applying these indices to the original dataset is devoid of dissimilar scales, thereby obviating the need for data normalization.

First, we perform an analysis with 4 clusters and ground truth the division in regions as appear in Fig.~\ref{cw_map}. In this case, the best option in terms of the RI is considered for EHyClus, and all the methods for benchmarking are also considered. Table~\ref{cwcomp} presents the obtained results, being EHyClus with hierarchical clustering and Euclidean distance on the first derivatives the best approach between all the considered methods. The clusters obtained applying EHyClus with this combination appears in the left panel of Fig.~\ref{cw_mapEHyClus}.

\begin{table}

\centering
\begin{tabular}{lcccc}
\toprule
  & Purity & Fmeasure & RI & Time \\
\midrule
EHyClus & 0.7714 & 0.6768 & 0.7849 & 0.0002 \\ 
EHyClus-mean & 0.7429 & 0.5776 & 0.7714 & 0.0185\\
EHyClus-cov & 0.6857 & 0.5493 & 0.7160 & 0.0137 \\ 
Funclust & 0.4286& 0.4168 & 0.5345 & 0.0260 \\
funHDDC & 0.6571 & 0.4665  & 0.6924  & 0.9262 \\
FGRC & 0.6857 & 0.4892 & 0.6807 & 0.3491 \\
kmeans-d1 & 0.4286 & 0.2551 & 0.5681 & 0.1069\\
kmeans-d2 & 0.3530 &  0.3266 & 0.6626 & 0.4424 \\
gmfd-kmeans & 0.6286 & 0.4892 & 0.6807 & 0.6524 \\
\bottomrule
\end{tabular}
\caption{Purity, F-measure, Rand Index (RI) and execution time of Canadian Weather data for all the competitors models. \label{cwcomp}}
\end{table}


\begin{figure}[ht]
    \centering
    \begin{minipage}{0.49\textwidth}
        \centering
        \includegraphics[width=\textwidth]{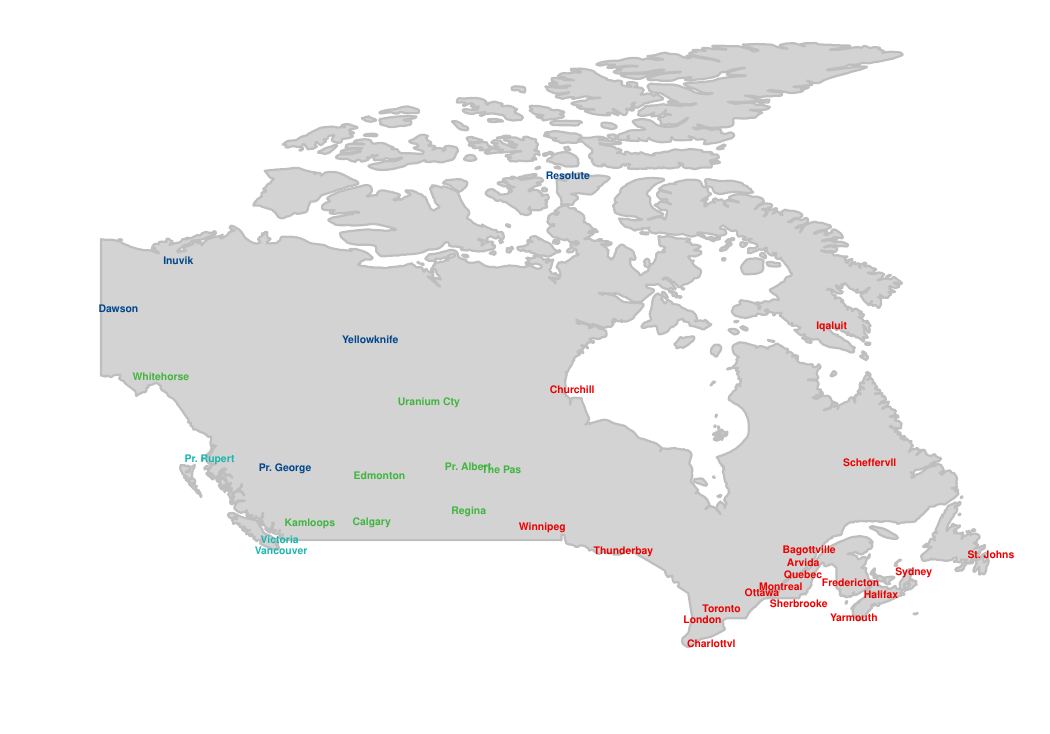}
    \end{minipage}
    \hfill
    \begin{minipage}{0.49\textwidth}
        \centering
        \includegraphics[width=\textwidth]{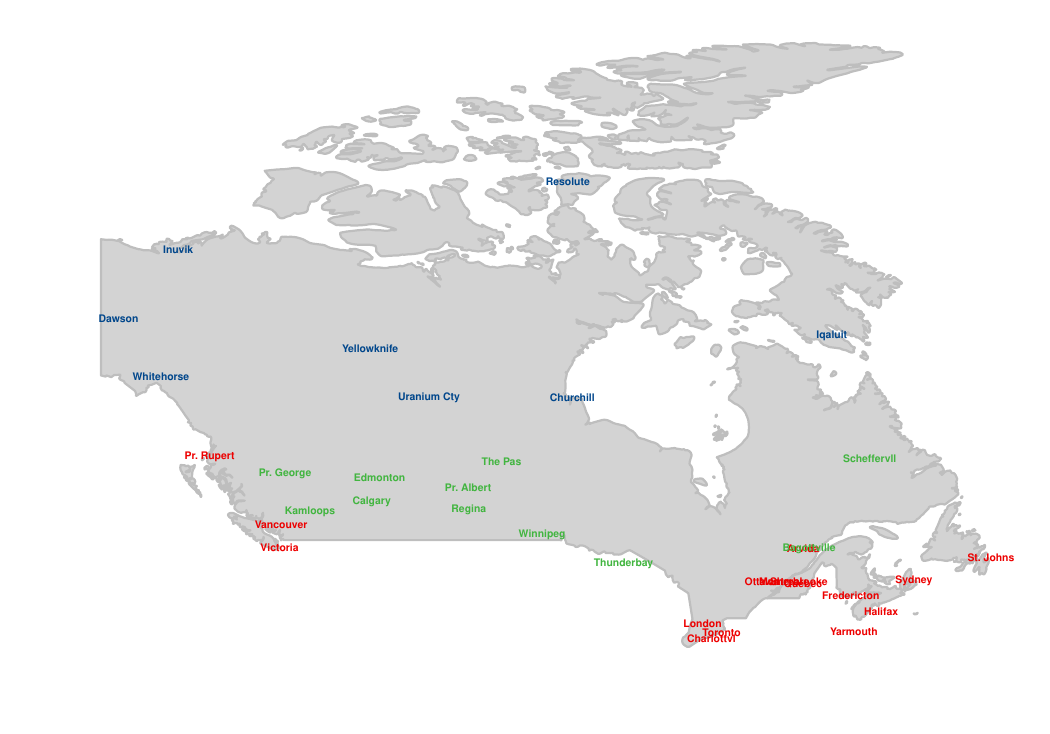}
    \end{minipage}
         \caption{Maps of Canada with the names of the stations in different colors. Left panel represents four clusters obtained with EHyClus. Right panel stands for three clusters obtained with auto-EHyClus.}
        \label{cw_mapEHyClus}
\end{figure}

The resulting groups share similar temperature and precipitation patterns, forming clusters with a clear geographical logic. Additionally, this map largely aligns with the regional distribution shown in Fig.~\ref{cw_map}, which has been used as the ground truth for Table~\ref{cwcomp}. However, differences arise at the boundaries of certain regions, such as Iqaluit in the Arctic region, Prince George and Kamloops in the Pacific region, and three stations in the Atlantic region: Churchill, Winnipeg, and Thunder Bay. The main limitation of this approach is the assumption that the regional classification in Fig.~\ref{cw_map} accurately reflects the behavior of temperature and precipitation. This may not always be the case.

As an alternative approach, we use the \verb|NbClust| R package, which considers 30 different indices to determine the optimal number of clusters. This method suggests three clusters, after which auto-EHyClus is applied. The resulting partition is obtained by applying k-means with Euclidean distance to \textbf{MEI} and \textbf{MHI} on the data, as well as the first and second derivatives. The final output is displayed in the right panel of Fig.~\ref{cw_mapEHyClus}. 

In this case, the 35 Canadian weather stations are grouped into three sets. The blue group is composed by northern Canada stations, characterized by subarctic or Arctic climates with long, harsh winters and short, cool summers. The red group includes stations along the Atlantic and Pacific coasts, which have maritime climates. The inclusion of central stations like Toronto and London can be attributed to their proximity to large lakes, which have a moderating effect on the climate. Finally, the green group includes stations in Central Canada that exhibit continental climates. Notably, Vancouver, Victoria, and Pr Rupert are included in the red group because of their location on the Pacific coast, whereas in the four-group classification they form a separate cluster.

This suggests that the four-group classification attempts to account for geographical details present in the ground truth, while auto-EHyClus focuses on defining climatic similarities based on temperature and precipitation data, aligning with the objectives of this analysis.

\subsection{Air quality data in Madrid}

This dataset examines air quality in Madrid, Spain’s capital, using open-access data sourced from the \href{https://datos.madrid.es/portal/site/egob}{Ayto. Madrid website}.  It provides hourly air quality measurements recorded throughout 2023, specifically tracking PM10 particles and nitrogen dioxide, with concentrations measured in $\mu g/m^3$. For this study, data from 13 monitoring stations in Madrid (Fig.~\ref{mad_curves}) were analyzed to investigate spatial patterns in air pollution. This approach allows for insights into the influence of urban design, traffic density, and green spaces on pollutant distribution. By classifying monitoring stations based on different pollutants, we can identify specific zones that are highly impacted, providing a foundation for targeted public health interventions and urban planning policies aimed at mitigating air quality issues in high-exposure areas.

\begin{figure}[ht]
\centering
\includegraphics[width=\textwidth]{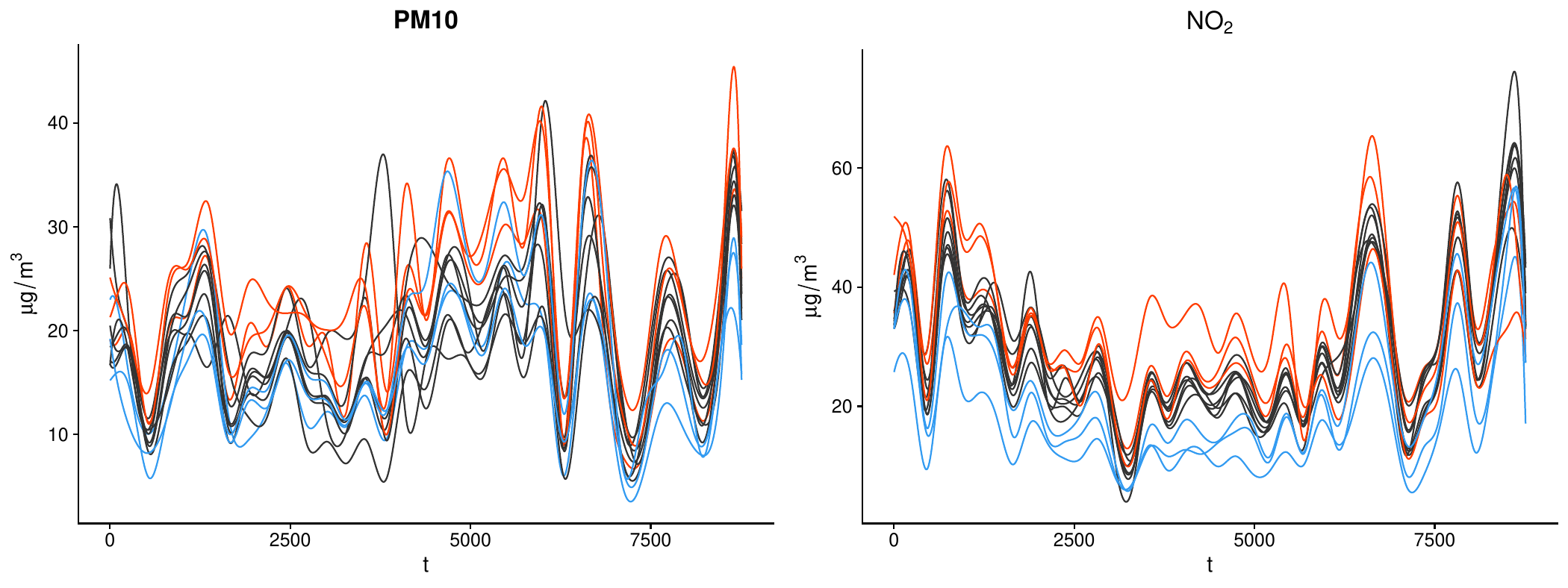}
\caption{PM10 and $\text{N0}_2$ curves of 13 different Madrid monitoring stations, grouped into three different clusters based on the concentrations of these two pollutants. The colors represent the three clusters obtained when applying EHyClus. \label{mad_curves}}
\end{figure}

To identify the optimal number of clusters, the \verb|NbClust| R package was used, which suggested three clusters as the best fit. The ``auto'' functionality of the EHyClus function from the \verb|ehymet| R package was then applied. The final classification, based on the application of k-means with Euclidean distance to the set composed of \textbf{MEI} on the first and second derivatives, and \textbf{MHI} on the data, as well as the first derivatives, resulted in the following three groups:

\begin{align*}
     \text{Cluster 1. } & \text{Escuelas de Aguirre, Urb. Embajada (Barajas), Plaza Elíptica.} \\
     \text{ Cluster 2. } & \text{C/ Farolillo, Moratalaz, Cuatro Caminos, Vallecas, Méndez Álvaro, }\\ 
     & \text{Paseo Castellana, Plaza Castilla.} \\
     \text{Cluster 3. } & \text{ Casa de Campo, Sanchinarro, Tres Olivos.}
\end{align*}

\begin{figure}[ht]
\centering
\includegraphics[scale=0.41]{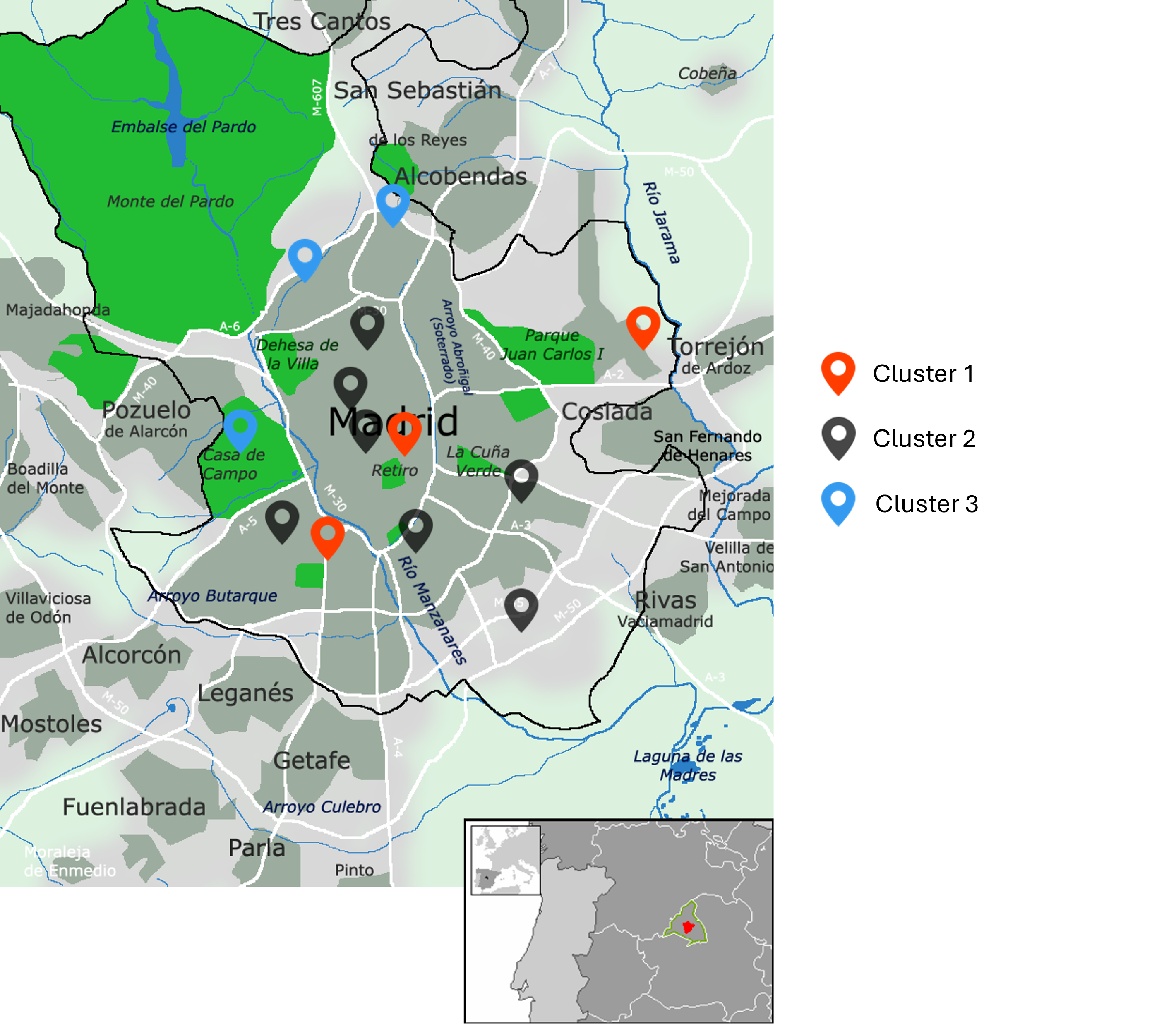}
\caption{Map of Madrid, Spain, showing weather stations grouped by pollution levels. Stations in the first cluster, represented in red, indicate areas with high pollution levels. The second cluster, shown in black, includes stations with moderate pollution, while the third cluster, shown in blue, represents stations with the lowest pollution levels. \label{mapaMadrid}}
\end{figure}

The three stations in the first group are likely impacted by high levels of traffic-related air pollution. Barajas (with its air traffic) and Plaza Elíptica (with road traffic) are both significant pollution hotspots, while Escuelas de Aguirre is influenced by its proximity to busy roads. The second cluster represents stations in moderately urbanized areas with varying levels of traffic and mixed commercial/residential zones. They are not as exposed as Plaza Elíptica or Barajas, but still experience considerable air pollution from both vehicle emissions and urban activities. Finally, the stations in the third group are in less urbanized areas, near green spaces (see Fig.~\ref{mapaMadrid}) or residential neighborhoods with less air pollution from traffic or industry. They generally show lower pollution levels compared to the more central or traffic-heavy zones. Thus, this classification, displayed in Fig.~\ref{mapaMadrid}, differentiates zones influenced by different sources of air pollution, based on factors such as traffic intensity, commercial activity, and proximity to green spaces.


\section{Conclusion} \label{conc}

The epigraph and hypograph indices, initially introduced by \citet{franc2011}, are fundamental tools for analyzing functional data in one dimension. However, extending these indices to the multivariate context is not straightforward, as it requires consideration of the interrelations among different dimensions. While previous attempts have extended these indices as a weighted average of the one-dimensional indices, we propose a novel multivariate formulation that goes beyond a combination of univariate measures.

In this study, we introduce the definitions of the univariate indices, the extension of the indices based on the weighted average of the univariate ones and our novel contribution, which takes into account the relations between different components. We also discuss the implications of adopting different definitions and their impact on the ordering of the indices. Theoretical properties of the proposed indices are also examined.

The multivariate indices are then applied to the context of clustering using EHyClus, a methodology initially designed for univariate functional data and available in the \verb|ehymet| R package \citep{pulido2023}. By leveraging the proposed multivariate definition of the indices, we extend EHyClus to accommodate MFD. This option is also available in the package.  
We validate the efficacy of EHyClus by applying it to both simulated and real datasets, comparing its performance against existing approaches in the literature for clustering MFD. Our results show that EHyClus is highly competitive in terms of Purity, Rand Index (RI), and F-measure, while also demonstrating favorable execution times. Additionally, we introduce an automatic criterion for selecting one combination of data and indices, addressing the challenge of unknown ground truth in real-world applications, as exemplified by the Madrid air quality dataset.

Beyond clustering, the proposed multivariate indices offer potential for enhancing other index-based methodologies, such as the functional boxplot by \citet{martin2016} and the homogeneity test by \citet{franc2020}. 

\bmhead{Acknowledgements} 
This research has been partially supported by Ministerio de Ciencia e Innovación, Gobierno de España, grant numbers  PTA2020-018802-I, PDC2022-133359, PID2022-137243OB-I00 and PID2022-137050NB-I00, TED2021-131264B-100 funded by MCIN/AEI/10.13039/501100011033 and European Union Next Generation EU/PRTR and ERDF A way of making Europe. This initiative has also
been partially carried out within the framework of Recovery,
Transformation and Resilience Plan funds,
financed by the European Union (Next Generation) through the grant
ANTICIPA and the ENIA 2022 Chairs for the creation of university-industry chairs in AI-AImpulsa: UC3M-Universia

\clearpage
\begin{appendices}

\section{Proofs}\label{secA1}

\subsection*{Proof of Theorem~\ref{rel} for the particular case of $p=3$.}

\begin{proof}

In this particular case, $$B^3_{1,2,3}=\sum_{i=1}^n \frac{\lambda(\{x_{i1}\leq x_1\} \cap \{x_{i2}\leq x_2\} \cap \{x_{i3}\leq x_3\})}{n\lambda(I)}.$$

Now, applying the rules of probability, 

\begin{equation*}
    \begin{split}
        B^3_{1,2,3}= &
        \sum_{i=1}^n \frac{\lambda(\{x_{i1}\leq x_1\} \cup \{x_{i2}\leq x_2\} \cup \{x_{i3}\leq x_3\})}{n \lambda(I)}\\ - & 
        \sum_{i=1}^n \frac{\lambda(x_{i1}\leq x_1)}{n \lambda(I)}-
        \sum_{i=1}^n \frac{\lambda(x_{i2}\leq x_2)}{n \lambda(I)}-
        \sum_{i=1}^n \frac{\lambda(x_{i3}\leq x_3)}{n \lambda(I)} \\+& 
        \sum_{i=1}^n \frac{\lambda(\{x_{i1}\leq x_1\}\cap \{x_{i2}\leq x_2\})}{n \lambda(I)}+
        \sum_{i=1}^n \frac{\lambda(\{x_{i1}\leq x_1\}\cap \{x_{i3}\leq x_3\})}{n \lambda(I)}\\+&
        \sum_{i=1}^n \frac{\lambda(\{x_{i2}\leq x_2\}\cap \{x_{i3}\leq x_3\})}{n \lambda(I)}.
    \end{split}
\end{equation*}

The expression above can be rewritten in terms of the definition of $B^3_{j_1,\ldots,j_r},$ with $\{j_1,\ldots,j_r\} \subseteq \{1,2,3\},$ and $r\leq 3,$ as follows:

\begin{equation*}
        B^3_{1,2,3}= \sum_{i=1}^n \frac{\lambda(\{x_{i1}\leq x_1\} \cup \{x_{i2}\leq x_2\} \cup \{x_{i3}\leq x_3\})} {n \lambda(I)}- B^3_{1} -B^3_{2}-B^3_{3}+B^3_{1,2}+B^3_{1,3}+B^3_{2,3}.
\end{equation*}

Taking into account that $\{x_{ij}\leq x_j\}^{\mathsf{c}} = \{x_{ij}> x_j\}, $

\begin{equation*}
        B^3_{1,2,3}= \sum_{i=1}^n \frac{\lambda(\{x_{i1}> x_1\}^{\mathsf{c}} \cup \{x_{i2}> x_2\}^{\mathsf{c}} \cup \{x_{i3}> x_3\}^{\mathsf{c}})}{n\lambda(I)}-  B^3_{1}-B^3_{2}-B^3_{3}+B^3_{1,2}+B^3_{1,3}+B^3_{2,3}.
    \label{b_p3}
\end{equation*}

And applying again the rules of probability,

\begin{equation*}
        B^3_{1,2,3}=  \sum_{i=1}^n \frac{\lambda(I)-\lambda(\{x_{i1}> x_1\} \cap \{x_{i2}> x_2\} \cap \{x_{i3}> x_3\})}{n\lambda(I)}-  B^3_{1}-B^3_{2}-B^3_{3}+B^3_{1,2}+B^3_{1,3}+B^3_{2,3}.
\end{equation*}

Now, taking into consideration that $\{x_{ij}\geq x_j\}$ can be written as the union of  two disjoint sets as follows: $$\{x_{ij}\geq x_j\} = \{x_{ij} > x_j\} \cup \{x_{ij} = x_j\}.$$

The following equality holds:
\begin{equation*}
\begin{split}
         \lambda(\{x_{i1}> x_1\} \cap \{x_{i2}> x_2\} \cap \{x_{i3}> x_3\}) = & \lambda(\{x_{i1}\geq x_1\} \cap \{x_{i2}\geq x_2\} \cap \{x_{i3}\geq x_3\})\\ - &
        \lambda(\{x_{i1}> x_1\} \cap \{x_{i2}= x_2\} \cap \{x_{i3}> x_3\}) \\ - &
        \lambda(\{x_{i1}= x_1\} \cap \{x_{i2}> x_2\} \cap \{x_{i3}> x_3\}) \\ - &
        \lambda(\{x_{i1}= x_1\} \cap \{x_{i2}= x_2\} \cap \{x_{i3}> x_3\}) \\ - &
        \lambda(\{x_{i1}> x_1\} \cap \{x_{i2}> x_2\} \cap \{x_{i3}= x_3\}) \\ - &
        \lambda(\{x_{i1}> x_1\} \cap \{x_{i2}= x_2\} \cap \{x_{i3}= x_3\}) \\ - &
        \lambda(\{x_{i1}= x_1\} \cap \{x_{i2}> x_2\} \cap \{x_{i3}= x_3\}) \\ - &
        \lambda(\{x_{i1}= x_1\} \cap \{x_{i2}= x_2\} \cap \{x_{i3}= x_3\}).
\end{split}
\end{equation*}

Consider $x_l$, with $1\leq l \leq n$, a curve in the sample. This implies that for $i=l$ $x_{ij}=x_{lj}$ and as a result $\lambda(\{x_{i1}= x_1\} \cap \{x_{i2}= x_2\} \cap \{x_{i3}= x_3\})=\lambda(I)$. Additionally, it holds that $\{x_{lj}> x_j\}=\emptyset$. This leads to the conclusion that all intersections other than the one containing all elements of the form ${x_{lj}=x_j}$ are empty sets.

Applying this to $B^3_{1,2,3}$, the following expression is obtained: 

\begin{equation}
\label{eq3}
    B^3_{1,2,3}= -B^3_{1}-B^3_{2}-B^3_{3}+B^3_{1,2}+B^3_{1,3}+B^3_{2,3} + 1 - A^3_{1,2,3} +\frac{1}{n}+R_3,
\end{equation}
where 

\begin{align*}
        R_3= &
        \sum_{\substack{i=1 \\ i \neq l}}^n \frac{\lambda(\{x_{i1}> x_1\} \cap \{x_{i2}= x_2\} \cap \{x_{i3}> x_3\})}{n\lambda(I)}\\ + &
        \sum_{\substack{i=1 \\ i \neq l}}^n \frac{\lambda(\{x_{i1}= x_1\} \cap \{x_{i2}> x_2\} \cap \{x_{i3}> x_3\})}{n\lambda(I)}\\ + &
        \sum_{\substack{i=1 \\ i \neq l}}^n \frac{\lambda(\{x_{i1}= x_1\} \cap \{x_{i2}= x_2\} \cap \{x_{i3}> x_3\})}{n\lambda(I)}\\ + &
        \sum_{\substack{i=1 \\ i \neq l}}^n \frac{\lambda(\{x_{i1}> x_1\} \cap \{x_{i2}> x_2\} \cap \{x_{i3}= x_3\})}{n\lambda(I)}\\ + &
        \sum_{\substack{i=1 \\ i \neq l}}^n \frac{\lambda(\{x_{i1}> x_1\} \cap \{x_{i2}= x_2\} \cap \{x_{i3}= x_3\})}{n\lambda(I)}\\ + &
        \sum_{\substack{i=1 \\ i \neq l}}^n \frac{\lambda(\{x_{i1}= x_1\} \cap \{x_{i2}> x_2\} \cap \{x_{i3}= x_3\})}{n\lambda(I)}\\ + &
        \sum_{\substack{i=1 \\ i \neq l}}^n \frac{\lambda(\{x_{i1}= x_1\} \cap \{x_{i2}= x_2\} \cap \{x_{i3}= x_3\})}{n\lambda(I)}.
\end{align*}
Thus, applying Equations~\eqref{mei_A} and~\eqref{mhi_B} to Equation~\eqref{eq3}:
\begin{equation*}
    \begin{split}
         \mathbf{MHI}_n(\mathbf{x_k})-\mathbf{MEI}_n(\mathbf{x_k})=& \mathbf{MHI}^3_{n,1,2}(\mathbf{x_k})+\mathbf{MHI}^3_{n,1,3}(\mathbf{x_k})+\mathbf{MHI}^3_{n,2,3}(\mathbf{x_k}) \\ - & 
         \mathbf{MHI}^3_{n,1}(\mathbf{x_k})-\mathbf{MHI}^3_{n,2}(\mathbf{x_k})-\mathbf{MHI}^3_{n,3}(\mathbf{x_k})+\frac{1}{n}+R_3.
    \end{split}
\end{equation*}
\end{proof}

\subsection*{Proof of Theorem~\ref{rel} for the general case.}

\begin{proof} 
Applying the rules of probability and the definition of $B_{j_1,\ldots,j_r}$ given by Equation~\eqref{bp}:
\begin{equation*}
    \begin{split}
        B^p_{1,\ldots,p} & = 
        (-1)^{p+1} \sum_{i=1}^n \frac{\lambda(\bigcup_{j=1}^p \{x_{ij} \leq x_j\})}{n\lambda(I)}+ \sum_{r=1}^{p-1}\sum_{1\leq j_1<\ldots<j_r\leq p}^p (-1)^{r+p+1} B_{n,j_1,\ldots,j_r} \\ & = 
        (-1)^{p+1} \sum_{i=1}^n \frac{\lambda(\bigcup_{j=1}^p \{x_{ij} > x_j\}^{\mathsf{c}})}{n\lambda(I)}+\sum_{r=1}^{p-1}\sum_{1\leq j_1<\ldots<j_r\leq p}^p (-1)^{r+p+1} B_{n,j_1,\ldots,j_r}\\ & = 
        (-1)^{p+1} \sum_{i=1}^n \frac{\lambda(I)-\lambda(\bigcap_{j=1}^p \{x_{ij} > x_j\})}{n\lambda(I)}+\sum_{r=1}^{p-1}\sum_{1\leq j_1<\ldots<j_r\leq p}^p (-1)^{r+p+1} B_{n,j_1,\ldots,j_r}\\ & = 
        \sum_{r=1}^{p-1}\sum_{1\leq j_1<\ldots<j_r\leq p}^p (-1)^{r+p+1} B^p_{j_1,\ldots,j_r}+(-1)^{p+1}+(-1)^p A^p_{1,\ldots p}\\  & +
        (-1)^{p+1}\frac{1}{n}+(-1)^{p+1}R_p,
    \end{split}
\end{equation*}
where $R_p = \sum_{k=1}^{2^p-1} \sum_{\substack{i=1 \\ i \neq j}}^n \frac{C}{n\lambda(I)},$ with $C \in \mathcal{C}_p$, where $\mathcal{C}_p$ is the set of the Lebesgue measure of all the possible intersections of $p$ elements of the type $\{x_{ij}> x_j\}$ or $\{x_{ij}= x_j\}$, $j=1,\ldots,p$. It is important to note that the set $\mathcal{C}_p$ is composed by $2^p$ elements. Nevertheless, the above summation is taken up to $2^p-1$ since the intersection that contains all the elements of type $\{x_{ij}> x_j\}$ is included in the disaggregation of $B^p_{1,\ldots,p}$.

Finally, the following relation is obtained for $\textbf{x}_k, \ 1 \leq k \leq n,$ a curve in the sample:
\begin{align*}
     \mathbf{MHI}_n(\mathbf{x}_k)+(-1)^{p}\mathbf{MEI}_n(\mathbf{x_k}) & = \sum_{r=1}^{p-1}\sum_{1\leq j_1 \ldots j_r\leq p}^p (-1)^{r+p+1} \mathbf{MHI}^p_{n,j_1,\ldots,j_r}(\mathbf{x_k}) \\ & +(-1)^{p+1}\frac{1}{n}+(-1)^{p+1}R_p.
\end{align*}

\end{proof}

\subsection*{Proof of Proposition~\ref{p1}.}

\begin{proof}
The proof for the epigraph is given here. The one for the hypograph index can be obtained in the same way.
\begin{itemize}
    \item[a.]  By definition,
    $$\mathbf{EI}(\mathbf{x}) =  1-P(\bigcap_{k=1}^p\{X_k(t)\geq x_k(t),t\in \mathcal{I}\}).$$   
    Thus, $X_k(t)\geq x_k(t)$, if and only if, $A_{k}(t)X_{k}(t)+b_k(t)\geq A_{k}(t) x_k(t)+b_k(t)$ and therefore, $$\mathbf{EI}(\mathbf{T}(\mathbf{x}))=\mathbf{EI}(\mathbf{x}).$$
    \item[b.] Given g is a one-to-one transformation of the interval $\mathcal{I}$ to $\mathcal{I}$, $X_{k}\geq x_k$, if and only if, $X_{k}(g)\geq x_k(g)$ $(t \leftrightarrow g)$. Therefore, $$\mathbf{EI}(\mathbf{x}(g))=\mathbf{EI}(\mathbf{x}).$$
\end{itemize}
\end{proof}

\subsection*{Proof of Proposition~\ref{p1_2}.}
\begin{proof}
The proof for the epigraph is given here. The one for the hypograph index can be obtained in the same way.
\begin{itemize}
    \item[a.]  By definition, $$\mathbf{MEI}(\mathbf{x})=  1-\frac{E(\lambda(\bigcap_{k=1}^p\{t \in \mathcal{I} : X_k(t)\geq x_k(t)\}))}{n\lambda(\mathcal{I})}$$ Thus, $X_{k}(t)\geq x_k(t)$, if and only if, $A_{k}(t)X_{k}(t)+b_k(t)\geq A_{k}(t) x_k(t)+b_k(t)$, and therefore, $$\mathbf{MEI}(\mathbf{T}(\mathbf{x}))=MEI(\mathbf{x}).$$
    
    \item[b.]  Given g is a one-to-one transformation of the interval $\mathcal{I}$ to $\mathcal{I}$, $(t \leftrightarrow g)$, $$\lambda(\bigcap_{k=1}^p\{t \in \mathcal{I} : X_k(t)\geq x_k(t)\}) = \lambda(\bigcap_{k=1}^p\{t \in \mathcal{I} : X_k(g(t))\geq x_k(g(t))\}),$$
    and thus, $$\mathbf{MEI}(\mathbf{x}(g))=\mathbf{MEI}(\mathbf{x}).$$
\end{itemize}
\end{proof}

\subsection*{Proof of Proposition~\ref{p2}.}
\begin{proof}
   Consider $D_M = \{x:\min_{1\leq k \leq p} \norm{x_k}_{\infty} \geq M\}$, and let prove that $$\underset{x \in D_M}{\sup} \max \{\mathbf{EI}(\mathbf{x},P_{\mathbf{X}}), 1-\mathbf{HI}(\mathbf{x},P_{\mathbf{X}})\}  \to 1, \text{when} \ M \to \infty.$$ 
   
    By definition, we can write the indices as follows
   $$\textbf{EI}(\textbf{x})=1-P\left(\bigcap_{k=1}^p\{X_k(t)\geq x_k(t),\text{ for all }t\in \mathcal{I}\}\right)=1-P(\cap_{k=1}^p A_k),$$
   and
   $$1-\textbf{HI}(\textbf{x})=1-P\left(\bigcap_{k=1}^p\{X_k(t)\leq x_k(t),\text{ for all }t\in \mathcal{I}\}\right)=1-P(\cap_{k=1}^p B_k),$$
   where $A_k=\{X_k(t)\geq x_k(t),\text{ for all } t\in \mathcal{I}\}$ and $B_k=\{X_k(t)\leq x_k(t), \text{ for all }t\in \mathcal{I}\}.$

   Now,
       \begin{align*}
        \max \{\mathbf{EI}(\mathbf{x},P_{\mathbf{X}}), 1-\mathbf{HI}(\mathbf{x},P_{\mathbf{X}})\} = & \max \{1-P(\cap_{k=1}^p A_k), 1-P(\cap_{k=1}^p B_k)\}  \\ = & 
        1-\min\{P(\cap_{k=1}^p A_k), P(\cap_{k=1}^p B_k)\}.
    \end{align*}

    Thus, the proof of this proposition is equivalent to prove that 
    $$\sup_{\textbf{x}\in D_M} \min\{P(\cap_{k=1}^p A_k), P(\cap_{k=1}^p B_k)\} \to 0, \text{when} \ M \to \infty.$$

    The following inequality holds:
    $$\min\{P(\cap_{k=1}^p A_k), P(\cap_{k=1}^p B_k)\} \leq \max_k \min\{P(A_k), P( B_k)\},$$
    and by Propositions 1 and 5 in \citet{lop2011}, we have that, for all $k \in \{1,\ldots,p\}$,
    $$\sup_{\norm{x_k}_{\infty}\geq M} \min\{P(A_k), P( B_k)\} \to 0, \text{when} \ M \to \infty.$$

    Then,
    \begin{align*}
        \sup_{\textbf{x}\in D_M} \min\{P(\cap_{k=1}^p A_k), P(\cap_{k=1}^p B_k)\} \leq &\sup_{\textbf{x}\in D_M} \max_k \min \{P(A_k), P( B_k)\}  \\ \leq &
    \sup_{\textbf{x}\in D_M} \min \{P(A_k), P( B_k)\} \to 0, \text{when} \ M \to \infty
    \end{align*}
    
   Let prove now that $$\underset{x \in D_M}{\sup} \max \{\mathbf{EI}_n(\mathbf{x}), 1-\mathbf{HI}_n(\mathbf{x})\}  \overset{a.s}{\to} 1, \ \text{ when } M \ \to \infty.$$

    In this case we consider $A_{i,k}= \{x_{i,k}(t)\geq x_k(t), \text{ for all } t\in \mathcal{I}\},$ and \\ $B_{i,k}= \{x_{i,k}(t)\leq x_k(t), \text{ for all } t\in \mathcal{I}\},$ to have that
    $$\mathbf{EI}_n(\textbf{x})=1-\frac{1}{n}\sum_{i=1}^n I(\cap_{k=1}^p A_{i,k}), $$ and
    $$1-\mathbf{HI}_n(\textbf{x})=1-\frac{1}{n}\sum_{i=1}^n I(\cap_{k=1}^p B_{i,k}).$$

    Now,
       \begin{align*}
        \max \{\mathbf{EI}_n(\mathbf{x}), 1-\mathbf{HI}_n(\mathbf{x})\} = & \max \{1-\frac{1}{n}\sum_{i=1}^n I(\cap_{k=1}^p A_k), 1-\frac{1}{n}\sum_{i=1}^n I(\cap_{k=1}^p B_k)\}  \\ = & 
        1-\min\{\frac{1}{n}\sum_{i=1}^n I(\cap_{k=1}^p A_k), \frac{1}{n}\sum_{i=1}^n I(\cap_{k=1}^p B_k)\}.
    \end{align*}

   Again, to prove that $$\underset{\textbf{x} \in D_M}{\sup} \max \{\mathbf{EI}_n(\mathbf{x}), 1-\mathbf{HI}_n(\mathbf{x})\}  \overset{a.s}{\to} 1, \ \text{ when } M \ \to \infty,$$ is equivalent to prove $$\underset{\textbf{x} \in D_M}{\sup} \min\{\frac{1}{n}\sum_{i=1}^n I(\cap_{k=1}^p A_k), \frac{1}{n}\sum_{i=1}^n I(\cap_{k=1}^p B_k)\} \overset{a.s}{\to} 0, \ \text{ when } M \ \to \infty.$$

   By Proposition 5 in \citet{lop2011}, we also have that, for all $k \in \{1,\ldots,p\}$, $$\sup_{\norm{x_k}_{\infty}\geq M} \min\{\frac{1}{n}\sum_{i=1}^n I(A_{i,k}), \frac{1}{n}\sum_{i=1}^n I(B_{i,k})\} \to 0, \text{when} \ M \to \infty.$$

   Thus, 
    \begin{align*}
        & \sup_{\textbf{x}\in D_M} \min\{\frac{1}{n}\sum_{i=1}^n I(\cap_{k=1}^p A_{i,k}), \frac{1}{n}\sum_{i=1}^n I(\cap_{k=1}^p B_{i,k})\}  \\ \leq & 
        \sup_{\textbf{x}\in D_M} \max_k \min \{\frac{1}{n}\sum_{i=1}^n I(A_{i,k}), \frac{1}{n}\sum_{i=1}^n I(B_{i,k})\} \\ \leq  &
    \sup_{\textbf{x}\in D_M} \min \{\frac{1}{n}\sum_{i=1}^n I(A_{i,k}), \frac{1}{n}\sum_{i=1}^n I(B_{i,k})\} \to 0, \text{when} \ M \to \infty.
    \end{align*}
\end{proof}

\clearpage

\section{Tables with notation} \label{secA2}

\subsection*{Table displaying the combinations of data and indexes considered in this work.}
\begin{table}[htbp]
\centering
\begin{tabular}{@{}p{6cm}p{9cm}@{}}
\toprule
 Notation & Description\\
\midrule
\_.MEIMHI = (\textbf{MEI}, \textbf{MHI})  & The modified epigraph and the hypograph index on the original curves.  \\
d.MEIMHI = (\textbf{dMEI}, \textbf{dMHI})  & The modified epigraph and the hypograph index on the first derivatives.  \\
d2.MEIMHI = (\textbf{d2MEI}, \textbf{d2MHI})  & The modified epigraph and the hypograph index on the second derivatives.  \\
\_d.MEIMHI = (\textbf{MEI}, \textbf{MHI}, \textbf{dMEI}, \textbf{dMHI})  & The modified epigraph and the hypograph index on the original curves and on the first derivatives. \\
\_d2.MEIMHI = (\textbf{MEI}, \textbf{MHI}, \textbf{d2MEI}, \textbf{d2MHI})  & The modified epigraph and the hypograph index on the original curves and on the second derivatives. \\
dd2.MEIMHI = (\textbf{dMEI}, \textbf{dMHI}, \textbf{d2MEI}, \textbf{d2MHI})  & The modified epigraph and the hypograph index on the first and on the second derivatives. \\
\_dd2.MEIMHI = (\textbf{MEI}, \textbf{MHI}, \textbf{dMEI}, \textbf{dMHI}, \textbf{d2MEI}, \textbf{d2MHI})  & The modified epigraph and the hypograph index on the original curves, first and second derivatives. \\
\_d.MEI = (\textbf{MEI}, \textbf{dMEI}) & The modified epigraph index on the original curves and first derivatives. \\
\_d2.MEI = (\textbf{MEI},\textbf{d2MEI}) & The modified epigraph index on the original curves and on the second derivatives. \\
dd2.MEI = (\textbf{dMEI}, \textbf{d2MEI}) & The modified epigraph index on the first and on the second derivatives. \\
\_dd2.MEI = (\textbf{MEI}, \textbf{dMEI}, \textbf{d2MEI}) & The modified epigraph index on the original curves, first and second derivatives. \\
\_d.MHI = (\textbf{MHI}, \textbf{dMHI}) & The modified hypograph index on the original curves and on the first derivatives. \\
\_d2.MHI = (\textbf{MHI}, \textbf{d2MHI}) & The modified hypograph index on the original curves and on the second derivatives. \\
dd2.MHI = (\textbf{dMHI}, \textbf{d2MHI}) & The modified hypograph index on the first and son the econd derivatives. \\
\_dd2.MHI = (\textbf{MHI}, \textbf{dMHI}, \textbf{d2MHI}) & The modified hypograph index on the original curves, first and second derivatives. \\
\bottomrule
\end{tabular}
\caption{Notation and description of the combinations of data and indices.\label{data}}
\end{table}

\newpage
\subsection*{Table displaying the clustering method applied to the resulting multivariate dataset}
\begin{table}[htbp]

\centering
\begin{tabular}{@{}p{4cm}p{9cm}@{}}
\toprule
 Notation & Description\\
\midrule
single.(b).(c)  & Hierarchical clustering with single linkage and Euclidean distance.  \\
complete.(b).(c) & Hierarchical clustering with complete linkage and Euclidean distance.   \\
average.(b).(c) & Hierarchical clustering with average linkage and Euclidean distance.    \\
centroid.(b).(c) & Hierarchical clustering with centroid linkage and Euclidean distance.   \\
ward.D2.(b).(c) & Hierarchical clustering with Ward method and Euclidean distance.   \\
kmeans.(b).(c)-euclidean  & k-means clustering with Euclidean distance. \\
kmeans.(b).(c)-mahalanobis  & k-means clustering with Mahalanobis distance. \\
kkmeans.(b).(c)-gaussian  & kernel k-means clustering with a Gaussian kernel. \\
kkmeans.(b).(c)-polynomial  & kernel k-means clustering with a polynomial kernel. \\
spc.(b).(c) & spectral clustering. \\
svc.(b).(c)-kmeans  & support vector clustering with k-means initialization. \\
svc.(b).(c)-kkmeans  & support vector clustering with kernel k-means initialization. \\
\bottomrule
\end{tabular}
\caption{Notation and description of the clustering method applied to the dataset obtained from the combination of data and indices given by (b).(c).\label{comb}}
\end{table}




\end{appendices}


\clearpage
\addcontentsline{toc}{chapter}{Bibliography}

\printbibliography

\end{document}